\newcommand{\GMM}{\mathtt{GMM}}
\newcommand{\SSM}{\mathtt{SSM}}
\newcommand{\Ka}{\mathtt{K81}^{\ast}}
\newcommand{\Kb}{\mathtt{K80}^{\ast}}
\newcommand{\JC}{\mathtt{JC69}^{\ast}}
\newcommand{\RR}{\mathbb{R}}
\newcommand{\CC}{\mathbb{C}}
\newcommand{\lra}{\longrightarrow}
\newcommand{\al}{\alpha}
\newcommand{\tr}{\operatorname{tr}}
\renewcommand{\a}{\mathtt{A}}
\renewcommand{\c}{\mathtt{C}}
\newcommand{\g}{\mathtt{G}}
\renewcommand{\t}{\mathtt{T}}
\theoremstyle{definition}
\newtheorem{defi}{Definition}[section]
\theoremstyle{plain}
\newtheorem{lema}[defi]{Lemma}
\newtheorem{thm}[defi]{Theorem}
\newtheorem{prop}[defi]{Proposition}
\newtheorem{rk}[defi]{Remark}
\newtheorem{alg}[defi]{Algorithm}
\newtheorem{teo-def}[defi]{Theorem/Definition}
\newenvironment{proofPropK81}{\noindent {\textit{Proof of Proposition \ref{PropK81}.}}}{\quad \hfill $\Box$}
\title{Generating Markov evolutionary matrices for a given branch length}
\author{Marta Casanellas}
\address{Departament de Matemàtica Aplicada I. ETSEIB. Universitat Polit\`ecnica de Catalunya. Avinguda Diagonal 647. 08028 Barcelona. Spain.}
\email{marta.casanellas@upc.edu}
\author{Anna Kedzierska}
\address{Departament de Matemàtica Aplicada I. ETSEIB. Universitat Polit\`ecnica de Catalunya. Avinguda Diagonal 647. 08028 Barcelona. Spain.}
\email{anna.kedzierska@upc.edu}
\thanks{Both authors are partially supported by Generalitat de Catalunya, 2009 SGR 1284. Research of the first author partially supported by Ministerio de Educaci\'on y Ciencia MTM2009-14163-C02-02.}
\begin{document}

\begin{abstract}
Under a markovian evolutionary process, the expected number of substitutions per site (also called \textit{branch length}) that have occurred when a sequence has evolved from another according to a transition matrix $P$ can be approximated by $-\frac{1}{4}\log \det P.$  When the Markov process is assumed to be continuous in time, i.e. $P=\exp Qt $ it is easy to simulate this evolutionary process for a given branch length (this amounts to requiring $Q$ of a certain trace). For the more general case (what we call \textit{discrete-time models}), it is not trivial to generate a substitution matrix $P$ of given determinant (i.e. corresponding to a process of given branch length). In this paper we solve this problem for the most well-known discrete-time models $\JC$, $\Ka$, $\Kb$, $\SSM$ and $\GMM$.  These models lie in the class of nonhomogeneous evolutionary models. For any of these models we provide concise algorithms to generate matrices $P$ of given determinant. Moreover, in the first four models, our results prove that any of these matrices can be generated in this way. Our techniques are mainly based on algebraic tools.

\end{abstract}
\maketitle

\section{Introduction}
Phylogenetic reconstruction methods are usually tested on
simulated data, i.e. DNA (or protein) sequences that have been
randomly generated following a molecular evolutionary model on a
phylogenetic tree. It is easy to generate a random DNA sequence
that evolves from a given DNA sequence under a given evolutionary
model if no more constrains are required: one just needs to give
random values to the parameters of the model and generate data
according to the conditional probabilities obtained from the
parameters. An extra effort is needed if
%one wants to generate a DNA sequence evolving
%from a given DNA sequence under a given evolutionary model and
the amount of ``substitution events" is fixed; this magnitude is
usually called the \textit{branch length} of the edge relating
both sequences in the phylogenetic tree.

We will assume (as it is commonly done) that
sites in a DNA sequence are independent and identically distributed (\textit{iid} hypothesis), so that one just models the evolution of
one site (thought as a random variable taking values in
$\{\a,\c,\g,\t\}$). The most common molecular evolutionary models used
in phylogenetics are the so-called \textit{continuous-time
models}. In these models, the substitution events along an edge
$e$ of a rooted phylogenetic tree occur following a continuous-time Markov process:
there is an instantaneous mutation rate matrix $Q$ (usually fixed
throughout the tree) that operates at intensity $\lambda_e$ and
for duration $t_e$ so that the substitution matrix (or transition
matrix) $P_e$ equals $ \exp(Q\cdot \lambda_e t_e).$ Among them
there are the time-reversible models Jukes-Cantor \texttt{JC69}
\cite{JC69}, Kimura two-parameters \texttt{K80} \cite{Kimura1980},
Kimura three-parameters \texttt{K81} \cite{Kimura1981},
\texttt{HKY} \cite{Hasegawa1985}, and \texttt{GTR} \cite{GTR}.

%For
%time-reversible models, there is a stationary distribution $\Pi$
%of nucleotides, and the expected number of substitutions that
%occur on edge $e$ is
%\begin{eqnarray}\label{brlength}
%l(e) &= &-\tr(D(\Pi)Q \lambda_e t_e)=\\
%& & -\log \det(\exp(D(\Pi)Q \lambda_e t_e)). \nonumber
%\end{eqnarray}
%
%This is usually known as the \textit{branch length} of edge $e.$
%If the stationary distribution is uniform (as is for
%\texttt{JC69}, \texttt{K80}, and \texttt{K81} models), this can be
%rewritten as $-\frac{1}{4}\log \det(P_e).$ Note that, in any case,
%the branch length can be computed from the determinant of the
%matrix $\exp(D(\Pi)Q_e),$ which has the same shape as the
%transition matrix $P_e$.

In this paper we consider a broader class of evolutionary models, the
\textit{(discrete-time)} Markov models on phylogenetic trees. Briefly, the parameters of these
models consist of a rooted tree topology, a root distribution, and substitution matrices  $P_e$ on the edges $e$ of the tree whose entries correspond to the
conditional probabilities $P(x|y,e)$ that a nucleotide $y$ at the
parent node of $e$ is substituted by nucleotide $x$ at the child
node.
In particular, there is no instantaneous rate matrix fixed for the
whole tree in these models, so that they account for what is
called \textit{nonhomogeneous} data: different lineages in the tree are
allowed to evolve at different rates. We refer to  \cite{Greuel2003}, \cite{Arbook}, and
\cite[chapter 8]{Semple2003} for a mathematical approach to the
evolutionary models used in this paper.

If a DNA sequence has evolved from another according to a substitution matrix $P_e$, then the number of substitutions per site that have occurred can be approximated by
\begin{equation}\label{brlength}
l(e) =-\frac{1}{4}\log \det(P_e) %&= &-\tr(D(\Pi)Q \lambda_e t_e)=\\
\end{equation}
(see \cite{barryhartigan87}). This is usually known as the \textit{branch length} of edge $e$ measured in the expected number of substitutions per site. In the case of stationary continuous-time models, it coincides with $-\frac{1}{4}\tr(D(\Pi)Q \lambda_e t_e)$ if $P_e=\exp(Q\cdot \lambda_e t_e)$ and $D(\Pi)$ is a diagonal matrix with entries corresponding to the stationary distribution $\Pi$.

Generating DNA sequences evolving under a stationary continuous-time
evolutionary model on an edge $e$ with preassigned
branch length $l$ and given rate matrix $Q$, is not difficult: according to equation
\eqref{brlength} one just needs to take $\lambda_e
t_e=-l/\tr(D(\Pi)Q)$ and follow the usual process to generate a
Poisson distribution according to these parameters. There are
several programs available for generating data under most-used
continuous-time evolutionary models, for example \texttt{seq-gen} \cite{Rambaut1997} and
\texttt{evolver} in \textit{PAML} \cite{Yang1997}.

Here we deal with the problem of generating data evolving under the
more general discrete-time models when the branch lengths of the tree are fixed. From what we
have seen above, this problem is equivalent to generate
substitution matrices $P_e$ (belonging to the evolutionary model)
with given determinant. As the substitution matrices are
stochastic matrices, this is not an easy task. We solve this
problem for the so-called equivariant models $\JC$, $\Ka$, $\Kb$ and $\SSM$ (\cite{Draisma},\cite{CFS3}), and for the general Markov model $\GMM$ (\cite{barryhartigan87}, \cite{Steel94}, \cite{Allman2003}). Models $\JC$, $\Ka$, $\Kb$ correspond to the discrete-time version of the corresponding continuous-time models, and $\SSM$ contains $\texttt{HKY}$ as a submodel. Our results for the first four models (Propositions \ref{PropJC}, \ref{PropK80}, \ref{PropK81}, and \ref{prop_SSM}) are
actually bidirectional: we provide algorithms for generating \textit{any}
strictly stochastic matrix $M$ with determinant equal to a given number $K\in (0,1),$ when $M$ is either a $\JC$, $\Ka$, $\Kb$ or $\SSM$ matrix. For the most general model \texttt{GMM} we provide a way of generating strictly stochastic matrices with determinant equal to $K$, but we are not able to claim whether we produce all of them. We observe that we are able to produce matrices that are not a exponential of a real rate matrix (cf. Remark \ref{Rmk_exp}).

The algorithms proposed in this paper have been implemented in C++ in order to generate multiple sequence alignments of DNA data evolving on any phylogenetic tree. This work will be presented in a forthcoming paper. Note that in \cite{Jermiin2003} the authors introduce an  algorithm to generate data on quartet trees under nonhomogeneous continuous-time models.

\section{Preliminaries}

\begin{defi}
A $4\times 4$ matrix $A$ with real entries and row sums equal to
1, $$A=\left( \begin{array}{ccccc}
        a_{1,1}    & a_{1,2} & a_{1,3}    &a_{1,4} \\
        a_{2,1}    & a_{2,2} & a_{2,3}    &a_{2,4} \\
            a_{3,1}    & a_{3,2} & a_{3,3}    &a_{3,4} \\
             a_{4,1}    & a_{4,2} & a_{4,3}    &a_{4,4} \end{array} \right) \quad \left(\sum_ja_{i,j}=1\right),$$
is called a $\GMM$ matrix. The $\GMM$ matrix above is called a
$\SSM$ matrix if $a_{3,1}=a_{2,4},$ $a_{3,2}=a_{2,3},$
$a_{3,3}=a_{2,2},$ $a_{3,4}=a_{2,1},$ $a_{4,1}=a_{1,4},$
$a_{4,2}=a_{1,3},$ $a_{4,3}=a_{1,2},$ $a_{4,4}=a_{1,1}.$ If
moreover $a_{1,1}=a_{2,2},$ $a_{1,2}=a_{2,1},$ $a_{1,3}=a_{2,4}$
and $a_{1,4}=a_{2,3},$ then $A$ is called a $\Ka$ matrix. If a
$\Ka$ matrix satisfies $a_{1,2}=a_{1,4},$ then it is called a
$\Kb$ matrix and it is called a $\JC$ matrix if also
$a_{1,2}=a_{1,3}.$

In other words, a $\SSM$ matrix is a matrix of type
$$\left( \begin{array}{ccccc}
        a    &b & c    &d \\
            e   & f  &g   & h \\
            h  & g  &f   & e \\
             d &   c  &b  &  a   \end{array} \right) \quad \textrm{with
             }\begin{array}{l}
             a+b+c+d=1\\
             e+f+g+h=1
             \end{array};$$
a $\Ka$ matrix is a matrix of type
$$\left( \begin{array}{ccccc}
        a    &b & c    &d \\
            b   & a  &d   & c \\
             c  &  d  &a   & b \\
             d &   c  &b  &  a   \end{array} \right) \quad \textrm{with }a+b+c+d=1;$$
a $\Kb$ matrix is a matrix of type
$$\left( \begin{array}{ccccc}
        a    &b & c    &b \\
            b   & a  &b   & c \\
             c  &  b  &a   & b \\
             b &   c  &b  &  a   \end{array} \right)  \quad \textrm{with }a+2b+c=1;$$
and a $\JC$ matrix is a matrix of type
$$\left( \begin{array}{ccccc}
        a    &b & b    &b \\
            b   & a  &b   & b \\
             b  &  b  &a   & b \\
             b &   b  &b  &  a   \end{array} \right) \quad \textrm{with } a+3b=1.$$
\end{defi}

The names of the matrices above come from well known evolutionary
models: in the stochastic case, $\GMM$ is a transition matrix for
the general Markov model (\cite{barryhartigan87}, \cite{Steel94},
\cite{Allman2003}), $\SSM$ for the strand symmetric model
introduced in \cite{CS}, $\Ka$ for the discrete-time version of
Kimura three-parameters model \cite{Kimura1981}, $\Kb$ for the
discrete-time version of Kimura two-parameters model
\cite{Kimura1980}, and $\JC$ for the discrete-time version of
Jukes-Cantor model \cite{JC69}.

\begin{defi}
A square matrix $A$ is called a  \textit{stochastic matrix} if it
has row sums equal to 1 and nonnegative real entries. It is called
\textit{strictly stochastic} if moreover all its entries are
strictly positive.
\end{defi}

We recall that the determinant of any stochastic matrix has
absolute value  less than or equal to 1 (this is a consequence of Perron-Frobenius theorem). In this paper we address
the problem of providing stochastic matrices of the above shapes
with given determinant $K\in (0,1).$

Before ending the preliminaries section we want to point out in
the lemma below that $\JC$, $\Kb$ and $\Ka$ matrices are
diagonalizable.

\begin{lema}\label{lem_diag} Let $A=\left( \begin{array}{ccccc}
        a    &b & c    &d \\
            b   & a  &d   & c \\
             c  &  d  &a   & b \\
             d &   c  &b  &  a   \end{array} \right)$ be a $\Ka$ matrix ($a+b+c+d=1$) and consider the matrix $$S=\left(
\begin{array}{cccc}
1&1&1&1\\
1&-1&-1&1\\
1 &- 1 & 1& -1\\
1&1 &-1&-1
\end{array} \right).$$  Then $S^{-1}=\frac{1}{4}S$ and
$S^{-1}AS$ is a diagonal matrix with diagonal entries
$\{1,a-b-c+d, a-b+c-d,a+b-c-d\}$ (in this order).
\end{lema}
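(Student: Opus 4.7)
The plan is to prove both claims by direct computation, exploiting the fact that $S$ is (up to scaling) the character table of the Klein four-group $\ZZ/2\times \ZZ/2$, under which the $\Ka$-symmetry of $A$ makes it a ``circulant'' matrix for the regular representation.

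First I would verify $S^{-1}=\tfrac{1}{4}S$ by computing $S^2$. The four rows of $S$, namely $(1,1,1,1)$, $(1,-1,-1,1)$, $(1,-1,1,-1)$, $(1,1,-1,-1)$, are pairwise orthogonal and each has squared norm $4$, so $S^{T}S=4I$. Since $S$ is symmetric, this gives $S^{2}=4I$ and hence $S^{-1}=\tfrac{1}{4}S$.

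Next I would check column by column that the four columns of $S$ are eigenvectors of $A$ with the stated eigenvalues. The first column $(1,1,1,1)^{T}$ is an eigenvector with eigenvalue $1$ because $A$ has constant row-sum $a+b+c+d=1$. For each remaining column $v$, one simply expands $Av$ row by row; the $\Ka$ pattern of $A$ guarantees that the four resulting entries equal the entries of $v$ multiplied by a common scalar: $a-b-c+d$ for $(1,-1,-1,1)^{T}$, $a-b+c-d$ for $(1,-1,1,-1)^{T}$, and $a+b-c-d$ for $(1,1,-1,-1)^{T}$. These are three routine four-entry verifications, organized conceptually by identifying the sign vectors with the three nontrivial characters of $\ZZ/2\times\ZZ/2$.

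Combining the two steps, $AS=SD$ with $D=\diag(1,\,a-b-c+d,\,a-b+c-d,\,a+b-c-d)$, and multiplying on the left by $S^{-1}=\tfrac{1}{4}S$ yields the claimed diagonalization. There is no serious obstacle; the only care needed is in bookkeeping the signs during the three eigenvalue checks, which is purely mechanical.
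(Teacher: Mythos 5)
Your proof is correct: $S$ is symmetric with pairwise orthogonal rows of squared norm $4$, so $S^2=4I$, and the four sign-vector columns are indeed eigenvectors of any $\Ka$ matrix with the stated eigenvalues (in the stated order), giving $S^{-1}AS=\diag(1,\,a-b-c+d,\,a-b+c-d,\,a+b-c-d)$. The paper omits the proof entirely, treating the lemma as a routine computation, and your character-table/discrete-Fourier-transform viewpoint is exactly the one the paper alludes to in the remark immediately following the lemma.
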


\begin{rk}
\rm The change of variables considered in the Proposition above corresponds to the discrete Fourier transform in the setting of \cite{Sturmfels2005}.
\end{rk}

\section{Generating $\JC$ matrices with given determinant}

\begin{prop}\label{PropJC}
Let $K \in (0,1)$ and let
$$A=\left( \begin{array}{ccccc}
        a    &b & b    &b \\
            b   & a  &b   & b \\
             b  &  b  &a   & b \\
             b &   b  &b  &  a   \end{array} \right),  \quad a+3b=1,$$
be a $\JC$ matrix. Then $A$ is a strictly stochastic
matrix with determinant equal to $K$ if and only if $a=\frac{1}{4}(1+3K^{1/3})$, $b=\frac{1-a}{3}.$
\end{prop}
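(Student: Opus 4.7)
The plan is to reduce the computation of $\det A$ to a one-variable problem by invoking Lemma~\ref{lem_diag}, and then to solve an elementary $2\times 2$ linear system.

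First, I would observe that a $\JC$ matrix is a particular case of a $\Ka$ matrix with $b=c=d$. Applying Lemma~\ref{lem_diag} in this case, the matrix $A$ is diagonalized by $S$, and its eigenvalues become
\[
\{1,\; a-b-c+d,\; a-b+c-d,\; a+b-c-d\} \;=\; \{1,\; a-b,\; a-b,\; a-b\}.
\]
Hence $\det A = (a-b)^3$. This is the key structural fact that trivializes the problem.

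Next, imposing $\det A = K$ forces $(a-b)^3 = K$. Since $K \in (0,1)$ and $a,b$ are real, the equation has the unique real solution $a-b = K^{1/3}$. Combined with the row-sum condition $a + 3b = 1$, this gives a linear system in $(a,b)$. Subtracting $a-b=K^{1/3}$ from $a+3b=1$ yields $4b = 1-K^{1/3}$, and therefore
\[
b = \frac{1-K^{1/3}}{4}, \qquad a = K^{1/3}+b = \frac{1+3K^{1/3}}{4},
\]
which matches the claimed expressions. The converse direction (these values do give $\det A = K$) is immediate from $(a-b)^3=K^{1/3\cdot 3}=K$.

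Finally, I would verify strict stochasticity. Since $K \in (0,1)$ we have $K^{1/3} \in (0,1)$, so both
\[
a = \tfrac{1}{4}(1+3K^{1/3}) \in \bigl(\tfrac{1}{4},1\bigr) \qquad \text{and} \qquad b = \tfrac{1}{4}(1-K^{1/3}) \in \bigl(0,\tfrac{1}{4}\bigr)
\]
are strictly positive, as required. There is no real obstacle here: the only substantive ingredient is Lemma~\ref{lem_diag}, and after that the argument is a one-line linear algebra computation followed by a positivity check. This clean situation (a single repeated nontrivial eigenvalue, yielding a cube root) is precisely what breaks down in the $\Ka$, $\Kb$, $\SSM$ cases treated later, where one gets several independent eigenvalues and hence a nontrivial parametric family.
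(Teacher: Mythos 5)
Your proposal is correct and follows essentially the same route as the paper: both apply Lemma~\ref{lem_diag} to get $\det A=(a-b)^3$ (the paper writes this as $\left(\frac{4a-1}{3}\right)^3$ using $b=\frac{1-a}{3}$), solve for $a$ via the real cube root, and then check strict positivity from $K^{1/3}\in(0,1)$. No substantive difference.
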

\begin{proof} Using Lemma \ref{lem_diag} we have $\det A=(\frac{4a-1}{3})^3.$ Therefore,  $A$ has determinant equal to $K$ if and only if $a=\frac{1}{4}(1+3K^{1/3})$. Moreover, as $K\in (0,1)$, we obtain $1>a>0$ (and so $0<b=\frac{1-a}{3}<1$), and we are done.
\end{proof}
Therefore we have:
\begin{alg}\label{alg_JC}
\rm(Generation of $\JC$ matrices with given determinant.)

\noindent\textit{Input:} $K$ in $(0,1).$

\noindent\textit{Output:} A strictly stochastic $\JC$ matrix $A$
with determinant $K.$
\begin{itemize}
\item[\texttt{Step 1:}] Set $a=\frac{1}{4}(1+3K^{1/3})$, $b=\frac{1-a}{3}.$
\item[\texttt{Final:}] Return
$$A=\left( \begin{array}{ccccc}
        a    &b & b    &b \\
            b   & a  &b   & b \\
             b  &  b  &a   & b \\
             b &   b  &b  &  a   \end{array} \right).$$
\end{itemize}
\end{alg}
\section{Generating $\Kb$ matrices with given determinant}

\begin{rk}\label{uniqueroot}
\rm As a technical step previous to the generation of $\Kb$
matrices with given determinant, we consider the polynomial
$$p_K(x)=-2x^3+x^2+K, \quad K \in (0,1),$$
and we observe that it has exactly one real root $s$ which lies in
$(\sqrt{K},1).$ Indeed, the coefficients of $p_K(x)$ have one
variation in sign and those of $p_K(-x)$ have no variation in
sign. Therefore, applying Descartes' rule we obtain that $p_K(x)$
has exactly one positive root $s$ and no negative roots. Moreover,
as $K$ is a constant in $(0,1)$, we have that
$p_K(\sqrt{K})=2K(1-\sqrt{K})$ is positive and $p_K(1)=K-1$ is
negative, implying that $s$ lies in $(\sqrt{K},1).$

Using the formula for the roots of a cubic polynomial we obtain
$$s=\frac{1}{6}+\frac{1}{6}\sqrt[3]{1+54K+6\sqrt{3K+81K^2}}+\frac{1}{6}\sqrt[3]{1+54K-6\sqrt{3K+81K^2}}.$$

As a byproduct, the polynomial $p_K(-x)$ has exactly one real
root which coincides with $-s.$
\end{rk}

\begin{prop}\label{PropK80}
Let $K \in (0,1)$ and let $s$ be the unique real root of
$p_K(x)=-2x^3+x^2+K$ (see Remark \ref{uniqueroot}). Let
$$A=\left( \begin{array}{ccccc}
        a    &b & c    &b \\
            b   & a  &b   & c \\
             c  &  b  &a   & b \\
             b &   c  &b  &  a   \end{array} \right),$$
be a $\Kb$ matrix ($a+2b+c=1$), and consider the change of variables
$\al=1-2(b+c),$ $\beta=1-4b.$ Then $A$ is a strictly stochastic
matrix with determinant equal to $K$ if and only if $\sqrt{K} <
|\al| < s$ and $\beta = K/\al^2.$
\end{prop}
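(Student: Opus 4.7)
The plan is to reduce everything to the diagonalisation already provided by Lemma \ref{lem_diag} and then carry out the arithmetic of the change of variables.

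First I would view the $\Kb$ matrix $A$ as a $\Ka$ matrix with $d=b$ and apply Lemma \ref{lem_diag}. The diagonal entries of $S^{-1}AS$ become $1$, $a-c$, $a-2b+c$, $a-c$, so $\det A = (a-c)^2(a-2b+c)$. Using the constraint $a+2b+c=1$, one checks $\alpha = 1-2(b+c) = a-c$ and $\beta = 1-4b = a-2b+c$, hence $\det A = \alpha^2\beta$. Consequently $\det A = K$ with $K\neq 0$ forces $\alpha\neq 0$ and $\beta = K/\alpha^2$; this already gives half of the statement.

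Next I would invert the change of variables to write
\[
b = \frac{1-\beta}{4}, \qquad c = \frac{1-2\alpha+\beta}{4}, \qquad a = \frac{1+2\alpha+\beta}{4}.
\]
Strict positivity of $b,c,a$ translates respectively into $\beta<1$, $\beta>2\alpha-1$, $\beta>-2\alpha-1$. Substituting $\beta = K/\alpha^2$ and clearing the positive factor $\alpha^2$ converts these into
\[
\alpha^2 > K, \qquad p_K(\alpha) = -2\alpha^3+\alpha^2+K > 0, \qquad p_K(-\alpha) > 0.
\]
The first inequality is $|\alpha|>\sqrt{K}$. For the other two I would invoke Remark \ref{uniqueroot}: $p_K$ has a unique real root $s$, and checking the sign of $p_K$ on its monotonicity intervals (or just evaluating at one convenient point) shows $p_K(x)>0$ iff $x<s$; symmetrically $p_K(-x)>0$ iff $x>-s$. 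Together these read $-s<\alpha<s$, i.e. $|\alpha|<s$.

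Combining the two bounds yields exactly $\sqrt{K}<|\alpha|<s$ with $\beta=K/\alpha^2$, and each implication is reversible since every step (solving for $a,b,c$, multiplying by $\alpha^2$) is an equivalence. The only subtle point is the sign analysis of $p_K$ on the whole real line rather than just at the endpoints; this is the step I expect to require the most care, but it follows directly from the uniqueness of the real root and the evaluation $p_K(0)=K>0$ already implicit in Remark \ref{uniqueroot}.
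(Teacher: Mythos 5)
Your proposal is correct and follows essentially the same route as the paper: diagonalize via Lemma \ref{lem_diag} to get $\det A=\alpha^2\beta$, invert the change of variables, translate strict positivity of $a,b,c$ into $\alpha^2>K$ and $p_K(\pm\alpha)>0$, and conclude from the uniqueness of the real root $s$ and the sign of the leading coefficient that these hold exactly when $\sqrt{K}<|\alpha|<s$. The sign analysis you flag as the delicate step is handled in the paper by the same leading-coefficient argument, so there is no gap.
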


\begin{proof}
First we note that the inverse change of variables is
$b=\frac{1-\beta}{4},$ $c=\frac{1+\beta-2\al}{4}.$ Moreover, $\al=a-c$ and $\beta=a-2b+c$ are the diagonal entries in $S^{-1}AS$ (different than 1) in Lemma \ref{lem_diag} and therefore $\det(A)=\al^2 \beta.$

$\Rightarrow$) Assume that $A$ is strictly stochastic with
determinant $K.$ Then $b$ is strictly positive, so that $\beta
<1.$ As $K=\det(A)=\al^2 \beta$ and $\beta <1$, we obtain $|\al|
>\sqrt{K}$. In particular, $\al \neq 0$ and we can
write $\beta =K/\al^2.$

Using the inverse change of variables above and $\beta =K/\al^2$
we have
$$a>0 \Leftrightarrow 2b+c<1 \Leftrightarrow \frac{3-K/\al^2 -2\al}{4}<1\Leftrightarrow p_K(-\al)>0.$$
As noted in Remark \ref{uniqueroot}, $p_K(-x)$ has exactly one
negative root which equals $-s $ and lies in $(-1,-\sqrt{K}).$ As
$p_K(-x)$ has positive leading term, $p_K(-\al)>0$ only holds if
$\al
>-s.$

Similarly, $c$ is strictly positive if and only if $p_K(\al)>0.$
Following an analogous argument, we obtain that $p_K(\al)>0$ if and
only if $\al <s$. Putting all together we obtain
$\sqrt{K}<|\al|<s,$ as desired.

$\Leftarrow$) Assume that $\sqrt{K} < |\al| < s$ and $\beta =
K/\al^2.$ In particular, we have $< \beta < \frac{K}{K}=1$ and we obtaing that
$b=\frac{1-\beta}{4}$ is strictly positive.

Now, as in the proof of $\Rightarrow$) we have that $c>0$ if and
only if $p_K(\al)>0.$ And also as above, this happens if and only
if $\al <s.$ As we assumed $|\al|<s$, we obtain $c>0.$

Lastly, $a>0$ if and only of $p_K(-\al)>0$, and this holds if and
only if $\al >-s$ (see proof of $\Rightarrow$). As we assumed
$|\al|<s$, we get that $A$ is a strictly stochastic matrix.

Moreover, $\det(A)=\al^2\beta=K$ as wanted.\end{proof}

Using the previous result, we provide the following algorithm for
generating strictly stochastic $\Kb$ matrices with given
determinant $K$. It is worth pointing out that with this algorithm
we are generating \textit{all} $\Kb$ strictly stochastic matrices
with determinant $K$.

\begin{alg}\label{alg_Kb}
\rm(Generation of $\Kb$ matrices with given determinant.)

\noindent\textit{Input:} $K$ in $(0,1).$

\noindent\textit{Output:} A strictly stochastic $\Kb$ matrix $A$
with determinant $K.$
\begin{itemize}
\item[\texttt{Step 1:}] Compute the unique real root $s$ of
$p_K(x)$ using Remark \ref{uniqueroot}.
\item[\texttt{Step 2:}] Choose $\al$ randomly  such that $\sqrt{K}<|\al|<s.$
\item[\texttt{Step 3:}] Let $\beta:=K/\al^2$, $b:=\frac{1-\beta}{4},$
$c:=\frac{1+\beta-2\al}{4},$ and $a:=1-2b-c.$
\item[\texttt{Final:}] Return $$A:=\left( \begin{array}{ccccc}
        a    &b & c    &b \\
            b   & a  &b   & c \\
             c  &  b  &a   & b \\
             b &   c  &b  &  a   \end{array} \right).$$
\end{itemize}
\end{alg}

\section{Generating $\Ka$ matrices with given determinant}
Previously to dealing with the case of $\Ka$ matrices, for each real number $K$ in $(0,1)$, we let $s$ be the unique positive root of the polynomial
$$q_K(z):=z(z+1)^2-4K.$$
Indeed, according to Descartes' rules of signs, this polynomial
has at most one positive root. Moreover, as $q_K(K)<0$ and
$q_K(1)>0$, there is exactly one positive root $s$ and it lies in
$(K,1).$ Using the formula for the roots of a cubic polynomial we
obtain
\begin{equation}\label{sol_cubic}
s=-\frac{2}{3}-\frac{1}{3}\sqrt[3]{-1-54K+6\sqrt{3K+81K^2}}-\frac{1}{3}\sqrt[3]{-1-54K-6\sqrt{3K+81K^2}}.
\end{equation}

\begin{prop}\label{PropK81}
Let $K \in (0,1)$ and let $s$ be the unique real root of
$q_K(z):=z(z+1)^2-4K.$ Let
$$A=\left( \begin{array}{ccccc}
        a    &b & c    &d \\
            b   & a  & d   & c \\
             c  & d  &a   & b \\
            d &   c  &b  &  a   \end{array} \right),$$
be a $\Ka$ matrix ($a+2b+c=1$), and consider the change of variables
$\alpha=1-2(b+c),$ $\beta=1-2(b+d),$ $\gamma=1-2(c+d).$ Then $A$ is a strictly stochastic
matrix with determinant equal to $K$ if and only if $|\alpha|\in (s,1)$, $|\beta| \in \left(I_{|\al|},J_{|\al|}\right)$ where
$$I_{|\al|}=\max\left\{\frac{-1+{|\al|}+\sqrt{(1-{|\al|})^2+\frac{4K}{{|\al|}}}}{2},\frac{1+{|\al|}-\sqrt{(1+{|\al|})^2-\frac{4K}{{|\al|}}}}{2}\right\},$$
$$J_{|\al|}=\min\left\{\frac{1+{|\al|}+\sqrt{(1+{|\al|})^2-\frac{4K}{{|\al|}}}}{2},\frac{1-{|\al|}+\sqrt{(1-{|\al|})^2+\frac{4K}{|\al|}}}{2}\right\},$$
and $\gamma = \frac{K}{\al \beta}.$
\end{prop}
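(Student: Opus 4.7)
The plan is to follow the strategy used in Proposition \ref{PropK80}, using Lemma \ref{lem_diag} to diagonalize $A$ and translate both the strict stochasticity condition and the determinant condition into inequalities among the non-trivial diagonal entries of $S^{-1}AS$. First I would observe that, since $a+b+c+d=1$, the proposed change of variables coincides with the non-trivial diagonal entries of $S^{-1}AS$: $\alpha = a-b-c+d$, $\beta = a-b+c-d$, $\gamma = a+b-c-d$. Consequently $\det A = \alpha\beta\gamma$, so the equation $\det A = K$ is equivalent to $\alpha\beta \neq 0$ together with $\gamma = K/(\alpha\beta)$. Inverting the (linear) change of variables yields
\[
4a = 1+\alpha+\beta+\gamma,\quad 4b = 1-\alpha-\beta+\gamma,\quad 4c = 1-\alpha+\beta-\gamma,\quad 4d = 1+\alpha-\beta-\gamma,
\]
so strict stochasticity of $A$ is the conjunction of four strict positivity conditions on these expressions.

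Next I would reduce to a sign-fixed range. Flipping the signs of any two of $\alpha,\beta,\gamma$ permutes $\{a,b,c,d\}$ and is compatible with $\gamma = K/(\alpha\beta)$; these three sign involutions preserve strict stochasticity and the determinant. This symmetry lets me assume $\alpha,\beta>0$ (hence $\gamma>0$) throughout, and at the end replace $\alpha,\beta$ by $|\alpha|,|\beta|$ in the conclusion.

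Under $\alpha,\beta,\gamma>0$, the inequality $a>0$ is automatic. The inequality $d>0$ becomes $\beta + K/(\alpha\beta) < 1+\alpha$, which after multiplying by $\alpha\beta$ becomes $\beta^2 - (1+\alpha)\beta + K/\alpha < 0$; its discriminant equals $q_K(\alpha)/\alpha$, so a nonempty solution set in $\beta$ forces $\alpha>s$, in which case $\beta$ must lie strictly between the two roots $\bigl((1+\alpha)\pm\sqrt{(1+\alpha)^2 - 4K/\alpha}\bigr)/2$. The remaining inequalities $b>0$ and $c>0$ are jointly equivalent to $|\beta - K/(\alpha\beta)| < 1-\alpha$, which in particular forces $\alpha<1$. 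Since the function $\beta \mapsto \beta - K/(\alpha\beta)$ is strictly increasing on $(0,\infty)$ from $-\infty$ to $+\infty$, this cuts out another interval for $\beta$ whose endpoints are the unique positive solutions of $\beta - K/(\alpha\beta) = \pm(1-\alpha)$, namely $\bigl(\pm(1-\alpha)+\sqrt{(1-\alpha)^2+4K/\alpha}\bigr)/2$. Intersecting the two intervals produces precisely $(I_{|\alpha|}, J_{|\alpha|})$ as stated.

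The main subtlety I anticipate is showing that this intersection is \emph{nonempty} exactly when $\alpha > s$; individual nonemptiness of the two intervals is insufficient. For this I would exhibit the witness $\beta_0 := \sqrt{K/\alpha}$. On one hand $\beta_0 - K/(\alpha\beta_0) = 0 < 1-\alpha$, so $\beta_0$ lies in the second interval automatically. On the other hand, $\beta_0$ is the geometric mean of the two roots of $\beta^2 - (1+\alpha)\beta + K/\alpha$ (whose product is $K/\alpha$), so as soon as these roots are real, positive and distinct (i.e. precisely when $\alpha > s$) the geometric mean lies strictly between them, placing $\beta_0$ in the first interval as well. Conversely, if $A$ is strictly stochastic with $\det A = K$, then $d>0$ already forces the first quadratic to admit solutions in $\beta$, so $\alpha > s$. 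Combined with the symmetry reduction, this yields the biconditional.
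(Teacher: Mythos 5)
Your proposal is correct and follows essentially the same route as the paper: diagonalize via Lemma \ref{lem_diag}, write $\det A=\al\be\gamma$, substitute $\gamma=K/(\al\be)$ into the four positivity conditions, and analyze the resulting quadratics in $\be$, whose discriminants yield $q_K(\al)/\al$ and hence the threshold $s$ and the endpoints $I_{|\al|},J_{|\al|}$. The paper packages this as Lemma \ref{Lem_K81} on the set $\Omega_y$, handling signs by imposing all four combinations $f(\pm x,\pm y)>0$ rather than your two-sign-flip symmetry on $(\al,\be,\gamma)$, and it proves nonemptiness of the interval intersection by a direct endpoint inequality where you use the geometric-mean witness $\sqrt{K/\al}$; your two shortcuts are sound and arguably cleaner, but the substance is identical.
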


\begin{rk}
\rm
As the change of variables above is symmetric in $b,c,d$, the roles of these three variables can be exchanged in the previous Proposition.
\end{rk}

Before proving this Proposition we need the following technical lemma.

\begin{lema}\label{Lem_K81}
Let $K$ be a real number in $(0,1)$, let $s$ be the unique
positive solution to $z(z+1)^2-4K=0$,  and consider the function
$$f(x,y)=1-x-y+\frac{K}{xy}$$
 defined over $\RR^2 \smallsetminus \{0\}.$%
%$$g(x,y)=1-x+y-\frac{K}{xy}.$$
Given $y>0$, we consider the set
$$\Omega_y=\left\{x\in \RR \, \left| \, x>0, f(x,y)>0, f(x,-y)>0, f(-x,y)>0, f(-x,-y)>0\right. \right\}.$$
Then $\Omega_y$ is not empty if and only if $y>s.$ Moreover, if $x \in \Omega_y$ and $y<1$, then $x$ belongs to $\left(I_y,J_y\right)$ where
$$I_y=\max\left\{\frac{-1+y+\sqrt{(1-y)^2+\frac{4K}{y}}}{2},\frac{1+y-\sqrt{(1-y)^2-\frac{4K}{y}}}{2}\right\} \quad \textrm{and}$$ $$J_y=\min\left\{\frac{1+y+\sqrt{(1-y)^2-\frac{4K}{y}}}{2},\frac{1-y+\sqrt{(1-y)^2+\frac{4K}{y}}}{2}\right\}.$$
\end{lema}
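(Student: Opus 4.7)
The plan is to convert the four strict inequalities defining $\Omega_y$ into explicit quadratic inequalities in $x$ (treating $y>0$ as a parameter) and then intersect the four solution sets. Since $xy>0$, the sign of $f(\varepsilon_1 x,\varepsilon_2 y)$ agrees with the sign of $xy\cdot f(\varepsilon_1 x,\varepsilon_2 y)$ for $\varepsilon_i\in\{\pm 1\}$, which turns each condition into a polynomial inequality: $-yx^2+y(1-y)x+K>0$, $-yx^2+y(1+y)x-K>0$, $yx^2+y(1-y)x-K>0$, and $yx^2+y(1+y)x+K>0$. The last is manifestly positive for $x,y>0$ and drops out.

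Next I would solve each of the three remaining quadratics by the quadratic formula. The first is a downward parabola positive at $x=0$, giving $0<x<U_1(y):=\tfrac{1}{2}\bigl(1-y+\sqrt{(1-y)^2+4K/y}\bigr)$. The third is an upward parabola negative at $x=0$, giving $x>L_3(y):=\tfrac{1}{2}\bigl(-1+y+\sqrt{(1-y)^2+4K/y}\bigr)$. The second is a downward parabola that is negative at $x=0$, and its discriminant equals $y^2(1+y)^2-4yK=y\,q_K(y)$; hence this quadratic takes positive values for some $x$ precisely when $y>s$ (by the very definition of $s$ as the unique positive root of $q_K$), in which case its solution set is $L_2(y)<x<U_2(y)$ with $L_2,U_2=\tfrac{1}{2}\bigl(1+y\mp\sqrt{(1+y)^2-4K/y}\bigr)$. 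This yields the description $\Omega_y=\bigl(\max\{L_2,L_3\},\min\{U_1,U_2\}\bigr)$, proves necessity of $y>s$, and identifies $I_y=\max\{L_2,L_3\}$ and $J_y=\min\{U_1,U_2\}$, which is precisely the ``moreover'' statement.

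The main (and only non-routine) obstacle is verifying the converse: that for $y>s$ the intersection is actually nonempty, i.e.\ that $L_2<U_1$ and $L_3<U_2$. The third comparison $L_3<U_1$ is the trivial identity $U_1-L_3=1-y$, which is positive exactly when $y<1$; this is where the implicit upper bound $y<1$ (present in the ``moreover'' hypothesis and in the intended setting of Proposition \ref{PropK81}) enters. Setting $A=(1-y)^2+4K/y$ and $B=(1+y)^2-4K/y$, the key algebraic observation is the cancellation $A+B=2+2y^2$. Then $L_2<U_1\Longleftrightarrow \sqrt{A}+\sqrt{B}>2y$, which upon squaring and using $A+B=2+2y^2$ reduces to $\sqrt{AB}>y^2-1$, automatic for $y<1$; and symmetrically $L_3<U_2\Longleftrightarrow \sqrt{A}-\sqrt{B}<2$, which upon squaring reduces to the same inequality $\sqrt{AB}>y^2-1$. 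This reduction via the identity $A+B=2+2y^2$ is the crux of the argument; everything else is routine bookkeeping with the quadratic formula.
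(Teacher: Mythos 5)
Your proof is correct and follows essentially the same route as the paper: the paper likewise clears denominators (multiplying by $x$ rather than $xy$) to obtain the four quadratics, intersects the resulting root intervals, detects the threshold $y>s$ from the discriminant $y\,q_K(y)$ of the one quadratic that can fail to take positive values, and settles the endpoint comparisons via the same identity $\Delta_1(y)+\Delta_1(-y)=2+2y^2$. If anything you are slightly more careful: you check the comparison $L_2<U_1$ explicitly (the paper only proves $L_3<U_2$ as its ``Claim''), and you correctly pinpoint that $U_1-L_3=1-y$ forces the hypothesis $y<1$, whereas the paper asserts this inequality ``for $y>0$''.
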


\begin{proof}
We fix $y>0$, and we view $f$ and $g$ as functions on $x$. For
$x>0$ we can multiply $f$, $g$ by $x$ and define quadratic
functions $\tilde{f}_y(x):=-x^2+(1-y)x+K/y $ and
$\tilde{g}_y(x):=x^2+(1+y)x+K/y$ so that $x$ belongs to
$\Omega_y$ if and only if $x>0$, $\tilde{f}_y(x)>0$,
$\tilde{f}_{-y}(x)>0$, $\tilde{g}_y(x)>0$ and
$\tilde{g}_{-y}(x)>0.$

Note that $\tilde{f}_y$ has discriminant
$\Delta_1(y)=(1-y)^2+\frac{4K}{y}$ and $\tilde{g}_y$ has
discriminant $\Delta_2(y)=(1+y)^2-\frac{4K}{y}.$

We observe that $\Delta_1(y)>0$ for $y>0$. Therefore
$\tilde{f}_y(x)=0$ has two real solutions
$x_{1,L}(y)=\frac{1-y-\sqrt{\Delta_1(y)}}{2}$,
$x_{1,R}(y)=\frac{1-y+\sqrt{\Delta_1(y)}}{2}$, and
$\tilde{f}_y(x)$ is positive for $x$ in $(x_{1,L},x_{1,R}).$ Note
that $\sqrt{\Delta_1(y)}>|1-y|$ for $y>0$, so $x_{1,L}(y)$ is
negative and $x_{1,R}(y)$ is positive. Therefore, for $x>0$ and
$y>0$, $\tilde{f}_y(x)$ is positive if and only if $x \in
(0,x_{1,R}(y)).$

On the other hand, as $\tilde{f}_{-y}$ has negative leading
coefficient, there exists $x$ with $\tilde{f}_{-y}(x)>0$ if and
only if $\Delta_1(-y)>0$. Note that $\Delta_1(-y)$ is positive for
$y>0$ if and only if $y>s$ (indeed, $\Delta_1(-y)$ coincides with $q_K(y)/y$).
%as $\tilde{g}_y(x)$ has negative leading
%coefficient, there exists $x$ with $\tilde{g}_y(x)>0$ if and only
%if $\Delta_2(y)>0.$ Note that $\Delta_2(y)$ coincides with
%$q(y)/y$ and it is positive for $y>0$ if and only if $y>s.$

Thus $\tilde{f}_{-y}(x)>0$ has a solution for $x>0$,  if and only
if $y>s.$ Now for $x>0,y>s$, the roots of $\tilde{f}_{-y}(x)=0$
are $x_{1,L}(-y)$ and $x_{1,R}(-y).$ Clearly $x_{1,R}(-y)$ and
$x_{1,L}(-y)$ are both positive for $y>s.$ Therefore, for $x>0$ and
$y>0$, we have $ \tilde{f}_{-y}(x)>0$ if and only if $y>s$ and $x\in
(x_{1,L}(-y),x_{1,R}(-y))$.

Now we study the positivity of $\tilde{g}_{y}(x)$ for $x>0$. Note
that $\tilde{g}_{y}$ has discriminant $\Delta_1(-y)$. As the
leading coefficient of $\tilde{g}_{y}$ is positive, we have that
$\tilde{g}_{y}(x)>0$ for all $y<s$ and $x\in \RR$ (because in this case the discriminant is negative). Moreover, if
$y>s$, the real roots of $\tilde{g}_{y}(x)=0$ are
$x_{2,L}(y)=\frac{-(1+y)-\sqrt{\Delta_1(-y)}}{2}$ and
$x_{2,R}(y)=\frac{-(1+y)+\sqrt{\Delta_1(-y)}}{2}.$ They are both
negative so that $\tilde{g}_{y}(-x)$ is positive for all $y>s$
and $x>0.$

We study the positivity of $\tilde{g}_{-y}(x)$ for $x>0$ and
$y>0$. The discriminant of $\tilde{g}_{-y}$ is $\Delta_1(y)$, and
it is positive for $y>0.$ Then the roots of $\tilde{g}_{-y}$ are
$x_{2,L}(-y)$ and $x_{2,R}(-y)$. For $y>0$ we have $x_{2,L}(-y)<0$
and $x_{2,R}(-y)>0$, and therefore $\tilde{g}_{-y}(x)>0$ if and
only if $x$ belongs to $(x_{2,R}(-y),+\infty).$

Summing up, we have proven that the set $\Omega_y$ is non-empty if
and only if $y>s.$ Moreover, in that case, if $x$ belongs to
$\Omega_y$, then $x$ lies in
\begin{equation*}
\left(0,x_{1,R}(y)\right)\cap \left(x_{1,L}(-y),x_{1,R}(-y)\right) \cap \left(0,+\infty\right) \cap \left(x_{2,R}(-y),+\infty\right).
\end{equation*}

It is easy to see that $x_{1,R}(y)$ is bigger than $x_{2,R}(-y)$ for $y>0$. Therefore the intersection of intervals above is equal to
$$\left(x_{1,L}(-y),x_{1,R}(-y)\right) \cap \left(x_{2,R}(-y),x_{1,R}(y)\right).$$

The statement of the lemma follows from the following claim.

\textit{Claim:} If $y<1$, then $x_{2,R}(-y)<x_{1,R}(-y).$

\textit{Proof of Claim:} This is equivalent to proving
\begin{equation}\label{eq_ineq}
\sqrt{\Delta_1(y)}-\sqrt{\Delta_1(-y)}<2.
\end{equation}

First of all we note that $\Delta_1(y) \leq \Delta_1(-y)$ if and only if $y\geq \frac{2K}{y}$. As $y>0$, this holds if and only if $y\geq \sqrt{2K}$. Therefore, for $y\geq \sqrt{2K}$, $\sqrt{\Delta_1(y)}-\sqrt{\Delta_1(-y)}$ is negative (and hence $<2.$)

If $y<\sqrt{2K}$, we have just seen that $\sqrt{\Delta_1(y)}>\sqrt{\Delta_1(-y)}.$ In this case, both sides in \eqref{eq_ineq} are positive and hence it is equivalent when raising it to the second power:
$$\Delta_1(y)+\Delta_1(-y)-2\sqrt{\Delta_1(y)\Delta_1(-y)}<4.$$

As we are assuming $y<1$, we have $\Delta_1(y)+\Delta_1(-y)-4=2y^2-2<0<2\sqrt{\Delta_1(y)\Delta_1(-y)},$ as we wanted to prove.
\end{proof}

\begin{proofPropK81}
Taking into account that $a=1-b-c-d$, we note that inverse change of variables is
$a=\frac{1}{4}(1+\al +\beta+ \gamma)$, $b=\frac{1}{4}(1-\al -\beta+ \gamma)$, $c=\frac{1}{4}(1-\al +\beta- \gamma)$, $d=\frac{1}{4}(1+\al -\beta- \gamma).$ Observing that $\al,\beta,\gamma$ are the diagonal entries in $S^{-1}AS$ in Lemma \ref{lem_diag}, we see that $\det A=\al \beta\gamma.$

$\Rightarrow$) Assume that $A$ is stochastic with determinant $K \in \left(0, 1\right)$. Then $\al$, $\beta,$ and $\gamma$ are non-zero, and $\gamma=\frac{K}{\al \beta}.$ From the positivity of $a,b,c,d$ we get that
$1+\al +\beta+ \frac{K}{\al \beta}>0$, $1-\al -\beta+\frac{K}{\al \beta}>0$, $1-\al +\beta-\frac{K}{\al \beta}>0,$ and $1+\al -\beta-\frac{K}{\al \beta}>0.$
In terms of Lemma \ref{Lem_K81}, these inequalities can be rewritten as
$$f(-\beta,-\al)>0, f(\beta,\al)>0, f(\beta,-\al)>0, f(-\beta,\al)>0. $$
%By symmetry under sign changes, we can assume $\al >0$ and $\beta >0.$
Therefore $|\beta|$ is an element of $\Omega_{|\al|}$, which implies that $|\al|>s$ (see Lemma \ref{Lem_K81}).
Moreover, as $\al=1-2(b+d)$, and $b,d>0$, we see that $|\al|<1.$ The result then follows from Lemma \ref{Lem_K81}.

$\Leftarrow$) Using Lemma \ref{Lem_K81} we see that under these assumptions, $\Omega_{|\al|}\neq \emptyset$ and $|\beta|$ belongs to $\Omega_{|\al|}.$ Therefore
$f(-\beta,-\al)>0, f(\beta,\al)>0, f(\beta,-\al)>0, f(-\beta,\al)>0.$ As $\gamma=\frac{K}{\al\beta}$, these inequalities coincide with $a>0$, $b>0$, $c>0$ and $d>0$, and we are done.
\end{proofPropK81}

The previous results give us a way of generating \textit{any} $\Ka$ matrix.
\begin{alg}\rm(Generation of $\Ka$ matrices with given determinant.)

\noindent\textit{Input:} $K$ in $(0,1).$

\noindent\textit{Output:} A strictly stochastic $\Ka$ matrix $A$
with determinant $K.$
\begin{itemize}
\item[\texttt{Step 1:}] Compute the unique real root $s$ of
$z(z+1)^2-4K$using \eqref{sol_cubic}.
\item[\texttt{Step 2:}] Choose $\al$ randomly  such that $1>|\al|>s.$
\item[\texttt{Step 3:}] Take $\beta$ randomly such that $|\beta|$ belongs to $(I_{|\al|},J_{|\al|})$.
\item[\texttt{Step 4:}] Set $\gamma=\frac{K}{\al\beta}.$
\item[\texttt{Step 5:}] Set $a=\frac{1}{4}(1+\al +\beta+ \gamma)$, $b=\frac{1}{4}(1-\al -\beta+ \gamma)$, $c=\frac{1}{4}(1-\al +\beta- \gamma)$, $d=\frac{1}{4}(1+\al -\beta- \gamma).$
\item[\texttt{Final:}] Return $$A=\left( \begin{array}{ccccc}
        a    &b & c    &d \\
            b   & a  & d   & c \\
             c  & d  &a   & b \\
            d &   c  &b  &  a   \end{array} \right).$$
\end{itemize}
\end{alg}

\begin{rk}
\label{Rmk_exp}\rm
The change of variables in Proposition \ref{PropK81} diagonalizes the matrix to Diag$(1,\al,\beta,\gamma)$ (see Lemma \ref{lem_diag}). As we have seen in that proposition, $\al$ and $\beta$ can be both negative. Therefore, using \cite{Culver}, we observe that the matrices produced by the algorithm above are not all of them of type $\exp(Q)$ for a real matrix $Q.$
\end{rk}

\section{Generating $\SSM$ matrices with given determinant}

\begin{defi}
Let $A$ be a $4\times4$ real matrix. We call $F(A)$ the matrix
obtained from $A$ after performing the basis change
$F(A)=S^{-1}AS$ where
$$S=\left(
\begin{array}{cccc}
1&0&0&-1\\
0&1&1&0\\
0 & 1 & -1& 0\\
1&0 &0&1
\end{array} \right).
$$
\end{defi}

When $A$ is a $\SSM$ matrix, $A$ can be viewed as an element in
$Hom_G(\CC^4,\CC^4)$ where $G=<(\a\t)(\c\g)>$ (see \cite{CFS}).
The change of basis above decomposes $\CC^4$ into its isotypic
components via the natural linear representation $G\lra
GL(\CC^4).$ This change of basis is also known as the generalized
Fourier transform (see \cite{CS}). We have the following fact:

\begin{lema}\label{rem_transform}
A $4\times 4$ matrix $A=(a_{i,j})$ is a $\SSM$ matrix if and only
if $F(A)$ has the following shape:
$$
F(A)=\left( \begin{array}{ccccc}
        \lambda   & 1-\lambda&0    &0 \\
            1-\mu & \mu  & 0   & 0 \\
             0  &  0  &\al  & \al' \\
             0 &   0  &\beta'  &  \beta   \end{array} \right).
$$
In this case, $\lambda,\mu$, $\al,\al'$, $\beta,\beta'$ can be written in terms
of the entries of $A$ as $\lambda = a_{1,1}+a_{1,4}$,
$\mu=a_{2,2}+a_{2,3}$, $\al=a_{2,2}-a_{2,3},$ $\al'=a_{2,4}-a_{2,1}$, $\beta=a_{1,1}-a_{1,4}$, and
$\beta'=a_{1,3}-a_{1,2}$.
 The inverse change of variables is
$a_{1,1}=(\lambda+\beta)/2$, $a_{1,2}=(1-\lambda-\beta')/2$,
$a_{1,3}=(1-\lambda+\beta')/2$ $a_{1,4}=(\lambda-\beta)/2$,
$a_{2,1}=(1-\mu-\al')/2$, $a_{2,2}=(\mu+\al)/2$,
$a_{2,3}=(\mu-\al)/2,$ $a_{2,4}=(1-\mu+\al')/2.$
\end{lema}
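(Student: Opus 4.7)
The proof is essentially a bookkeeping exercise based on an explicit computation of $F(A)=S^{-1}AS$, but the plan deserves to be laid out carefully because both implications and the two sets of formulas all fall out of the same calculation.

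The first step is to write $S^{-1}$ explicitly. The rows of $S$ are pairwise orthogonal with squared norm $2$, so $S^{-1}=\tfrac{1}{2}S^{T}$, namely
\[
S^{-1}=\tfrac{1}{2}\begin{pmatrix}1&0&0&1\\0&1&1&0\\0&1&-1&0\\-1&0&0&1\end{pmatrix}.
\]
With this in hand, I would compute $F(A)=S^{-1}AS$ for a completely generic $4\times 4$ matrix $A=(a_{i,j})$. The block structure of $S$ (it permutes coordinates into the pairs $\{1,4\}$ and $\{2,3\}$) means the product can be organized as two $2\times 2$ block computations, so the entries of $F(A)$ are all $\tfrac{1}{2}$ times signed sums of four entries of $A$ taken from the corners of the two $2\times 2$ blocks determined by those index pairs.

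Next I would inspect which entries of $F(A)$ must vanish and read off the symmetry conditions they encode. For a generic $A$, the four off-block entries of $F(A)$ (positions $(1,3),(1,4),(2,3),(2,4),(3,1),(3,2),(4,1),(4,2)$) are sums and differences of entries of $A$ such as $a_{1,2}+a_{4,3}-a_{1,3}-a_{4,2}$, and one checks directly that their simultaneous vanishing is equivalent to the eight $\SSM$-symmetry relations $a_{3,1}=a_{2,4}$, $a_{3,2}=a_{2,3}$, $a_{3,3}=a_{2,2}$, $a_{3,4}=a_{2,1}$, $a_{4,1}=a_{1,4}$, $a_{4,2}=a_{1,3}$, $a_{4,3}=a_{1,2}$, $a_{4,4}=a_{1,1}$. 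This gives both implications of the lemma at once: if $A$ is $\SSM$ then these entries of $F(A)$ vanish, and conversely if $F(A)$ has the stated shape then the same linear relations hold, hence $A$ is $\SSM$.

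Once the block shape is known, the formulas for $\lambda,\mu,\alpha,\alpha',\beta,\beta'$ are just the remaining entries of $F(A)$, which by the explicit product above come out to $\lambda=a_{1,1}+a_{1,4}$, $1-\lambda=a_{1,4}+a_{4,4}$ etc., and using the $\SSM$ identifications one obtains exactly the expressions in the statement. Finally, the inverse change of variables can be derived either by solving the six-variable linear system, or more cleanly by computing $A=SF(A)S^{-1}$ with the generic block form of $F(A)$; the resulting entries of $A$ are the formulas listed in the lemma. The only substantive thing to double-check in this last step is that the $(1,1)$ and $(2,2)$ entries of the first block of $F(A)$ indeed sum with their row partners to $1$ (this is why $1-\lambda$ and $1-\mu$ appear), but that follows immediately from the computation showing $\lambda+(1-\lambda)=a_{1,1}+a_{1,4}+a_{4,1}+a_{4,4}$ and so on. No step is genuinely hard; the main risk is a sign or index error in the explicit product, and organizing the calculation block by block is the cleanest way to avoid that.
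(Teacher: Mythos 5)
Your proposal is correct and follows essentially the same route as the paper: compute $F(A)=S^{-1}AS$ explicitly for a generic matrix, observe that the vanishing of the eight off-block entries is equivalent to the eight $\SSM$ symmetry relations, and use the row-sum condition to obtain the $\lambda,\ 1-\lambda$ (resp.\ $\mu,\ 1-\mu$) structure of the upper block together with the stated formulas and their inverses. The only blemishes are cosmetic --- you call the eight off-block positions ``four'', and the identity you invoke at the end should read $F(A)_{1,1}+F(A)_{1,2}=\tfrac12(\text{sum of rows $1$ and $4$ of }A)$ rather than $\lambda+(1-\lambda)=a_{1,1}+a_{1,4}+a_{4,1}+a_{4,4}$ --- neither of which affects the argument.
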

\begin{proof}
The matrix $F(A)$ for a generic matrix $A=(a_{i,j})$ is
$$\frac{1}{2}{\scriptsize\left(\begin{array}{cc|cc}
a_{1,1}+a_{1,4}+a_{4,1}+a_{4,4} & a_{1,2}+a_{1,3}+a_{4,2}+a_{4,3} &
a_{1,2}-a_{1,3}+a_{4,2}-a_{4,3} & a_{1,4}-a_{1,1}-a_{4,1}+a_{4,4}\\
a_{2,1}+a_{2,4}+a_{3,1}+a_{3,4} & a_{2,2}+a_{2,3}+a_{3,2}+a_{3,3} &
a_{2,2}-a_{2,3}+a_{3,2}-a_{3,3} & a_{2,4}-a_{2,1}-a_{3,1}+a_{3,4}\\
\hline
a_{2,1}+a_{2,4}-a_{3,1}-a_{3,4} & a_{2,2}+a_{2,3}-a_{3,2}-a_{3,3} &
a_{2,2}-a_{2,3}-a_{3,2}+a_{3,3} & a_{2,4}-a_{2,1}+a_{3,1}-a_{3,4}\\
a_{4,1}+a_{4,4}-a_{1,1}-a_{1,4} & a_{4,2}+a_{4,3}-a_{1,2}-a_{1,3} &
a_{1,3}-a_{1,2}+a_{4,2}-a_{4,3} & a_{1,1}-a_{1,4}-a_{4,1}+a_{4,4}
\end{array}\right).}
$$

If $A$ is a $\SSM$ matrix, then $a_{3,1}=a_{2,4},$
$a_{3,2}=a_{2,3},$ $a_{3,3}=a_{2,2},$ $a_{3,4}=a_{2,1},$
$a_{4,1}=a_{1,4},$ $a_{4,2}=a_{1,3},$ $a_{4,3}=a_{1,2},$ and
$a_{4,4}=a_{1,1}.$ Therefore the non-diagonal blocks are 0. Moreover,
as sums of rows are equal to 1, we have that the entries of each
row in the upper left block sum to 1:
$$\frac{1}{2}(a_{1,1}+a_{1,4}+a_{4,1}+a_{4,4} + a_{1,2}+a_{1,3}+a_{4,2}+a_{4,3})=1,$$
$$\frac{1}{2}(a_{2,1}+a_{2,4}+a_{3,1}+a_{3,4} + a_{2,2}+a_{2,3}+a_{3,2}+a_{3,3})=1.$$

Conversely, imposing that the entries of non-diagonal blocks in
$F(A)$ are equal to 0 is equivalent to imposing $a_{3,1}=a_{2,4},$
$a_{3,2}=a_{2,3},$ $a_{3,3}=a_{2,2},$ $a_{3,4}=a_{2,1},$
$a_{4,1}=a_{1,4},$ $a_{4,2}=a_{1,3},$ $a_{4,3}=a_{1,2},$ and
$a_{4,4}=a_{1,1}$ (adding and
subtracting certain pairs of equations).  Moreover, $F(A)_{1,1}+F(A)_{1,2}=1$ implies
that sum of rows 1 and 4 is equal to 2 (and similar for rows 2 and
3). But we have just seen that the set of entries in the first
(resp. second) row is equal to the set of entries in the forth
(resp. third) row, thus the sum of entries in each row is equal to
1.
\end{proof}

In the following lemma we characterize the stochasticity of $A$
via $F(A)$.
\begin{lema}\label{lem_SSM}
$A$ is a strictly stochastic $\SSM$ matrix if and only if
$$
F(A)=\left( \begin{array}{ccccc}
        \lambda   & 1-\lambda&0    &0 \\
            1-\mu & \mu  & 0   & 0 \\
             0  &  0  &\al  & \al' \\
             0 &   0  &\beta'  &  \beta   \end{array} \right)
$$
with $\lambda,\mu \in (0,1)$, $|\beta|<\lambda$, $|\beta'|<1-\lambda$,
$|\al|<\mu$, and $|\al'|<1-\mu$.
\end{lema}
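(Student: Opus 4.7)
The plan is to deduce this characterization directly from Lemma \ref{rem_transform}. That lemma already shows that the block-diagonal shape of $F(A)$ exhibited in the statement is equivalent to $A$ being an $\SSM$ matrix (note that the condition that each of the two upper-left diagonal entries of $F(A)$ be complementary to its neighbor in the row, namely $\lambda$ paired with $1-\lambda$ and $\mu$ paired with $1-\mu$, automatically encodes the row-sum constraint that makes $A$ a $\GMM$). So it remains only to translate strict positivity of all sixteen entries of $A$ into inequalities on $\lambda,\mu,\al,\al',\beta,\beta'$.

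First I would invoke the inverse change of variables from Lemma \ref{rem_transform}, which expresses each entry of the first two rows of $A$ as a halved linear combination of the Fourier parameters. The eight conditions $a_{i,j}>0$ for $i=1,2$ pair up naturally: $a_{1,1}>0$ together with $a_{1,4}>0$ is equivalent to $\lambda+\beta>0$ and $\lambda-\beta>0$, that is $|\beta|<\lambda$ (which in particular forces $\lambda>0$); similarly $a_{1,2}>0$ and $a_{1,3}>0$ give $|\beta'|<1-\lambda$ (forcing $\lambda<1$); $a_{2,2}>0$ and $a_{2,3}>0$ give $|\al|<\mu$ (forcing $\mu>0$); and $a_{2,1}>0$ and $a_{2,4}>0$ give $|\al'|<1-\mu$ (forcing $\mu<1$).

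Next I would observe that the remaining entries, those of rows 3 and 4 of $A$, need no separate analysis: since $A$ is $\SSM$, by definition each entry in rows 3 and 4 equals an entry in rows 2 or 1 respectively ($a_{3,1}=a_{2,4}$, $a_{3,2}=a_{2,3}$, etc.), so their strict positivity is automatic from the previous step. Conversely, if the six inequalities in the statement hold, the same formulas show that all eight entries of rows 1 and 2 of $A$ are strictly positive, and then the $\SSM$ symmetries give strict positivity of the remaining entries.

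There is no real obstacle here; the content is bookkeeping with the inverse Fourier formulas from Lemma \ref{rem_transform}. The only point worth being careful about is that the four bounds $\lambda,\mu\in(0,1)$ stated separately in the conclusion are in fact consequences of the four inequalities $|\beta|<\lambda$, $|\beta'|<1-\lambda$, $|\al|<\mu$, $|\al'|<1-\mu$ (since the right-hand sides must be positive), so they need not be imposed independently; I would mention this in passing for completeness.
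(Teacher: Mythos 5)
Your proof is correct and follows essentially the same route as the paper: both rely on Lemma \ref{rem_transform} for the equivalence between the block-diagonal shape of $F(A)$ and $A$ being $\SSM$, and then translate strict positivity of the entries into the stated inequalities via the (inverse) change of variables, with rows 3 and 4 handled automatically by the $\SSM$ symmetry. Your added observation that $\lambda,\mu\in(0,1)$ is already implied by the other four inequalities is a correct (and mildly clarifying) remark not made explicit in the paper.
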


\begin{proof}
If $A$ is a $\SSM$ matrix, then
$$A=\left(
\begin{array}{ccccc}
        a    &b & c    &d \\
            e   & f  &g   & h \\
            h  & g  &f   & e \\
             d &   c  &b  &  a   \end{array} \right)$$ with
$a+b+c+d=1$, $e+f+g+h=1$,
              and by Lemma \ref{rem_transform}, $F(A)$ has the shape above
with $\lambda = a+d$, $\mu=g+f$, $\beta=a-d$, $\beta'=c-b$, $\al=f-g$, and $\al'=h-e$.

If $a,b,\dots,h$ are strictly positive, then we clearly have
$\lambda,\mu \in (0,1)$, $|\al|<\mu$,$|\al'|<1-\mu$, $|\beta|<\lambda$, and $|\beta'|<1-\lambda$. 

Conversely, if $F(A)$ is block-diagonal as in the statement of the
lemma, we know by Lemma \ref{rem_transform} that $A$ is a $\SSM$
matrix with entries as above. As the inverse change of variables
is $a=(\lambda+\beta)/2$, $b=(1-\lambda-\beta')/2$, $c=(1-\lambda+\beta')/2$
$d=(\lambda-\beta)/2$, $e=(1-\mu-\al')/2$, $f=(\mu+\al)/2$,
$g=(\mu-\al)/2,$ $h=(1-\mu+\al')/2$, then if $\lambda,\mu$ lie $(0,1)$, $|\al|<\mu$, $|\al'|<1-\mu$,
$|\beta|<\lambda$, and $|\beta'|<1-\lambda$,  we
obtain that $a,b, \dots,h$ are strictly positive.
\end{proof}

Before stating the main result of this section we introduce some notation and we prove a technical result.

\begin{rk}\label{rmk_SSM}
\rm
Given $K \in (0,1)$, we consider the polynomial $r_K(z)=z^3+z-2K.$ It has a unique positive real root.  Indeed, by
Descartes' rule of signs we see that $r_K$ has at most one
positive real root. Moreover, as $r_K(K)$  is strictly negative
and $r_K(1)$ is strictly positive, there exists exactly one
positive root $\nu_0$ of $r_K(z)$ and it lies in $(K,1)$. Using
the formula for the roots of a cubic polynomial we actually get
$$\nu_0=-\frac{1}{3}\sqrt[3]{-27K+3\sqrt{81K^2+3}}-\frac{1}{3}\sqrt[3]{-27K-3\sqrt{81K^2+3}}.$$
\end{rk}

\begin{defi}
Given $K \in (0,1)$, we consider the polynomial $r_K(z)=z^3+z-2K$
and we call $\nu_0$ its unique positive root (Remark \ref{rmk_SSM}). We define $\Theta$ as the set of points $(\lambda,\mu)\in (0,1)^2$
satisfying
$$\nu_0+1\leq \lambda+\mu<2, \,\textrm{ and } |\lambda-\mu|<\min\left\{2-\lambda-\mu,\sqrt{\frac{r_K(\lambda+\mu-1)}{\lambda+\mu-1}}\right\}.$$
\end{defi}

\begin{lema}\label{lem_ineq}
Let $\lambda,\mu$ be real numbers in $(0,1)$ with $\lambda+\mu>1.$ Then
$(\lambda,\mu)$ belongs to $\Theta$ if and only if
\begin{equation}\label{eq_lemaSSM}
\frac{K}{\lambda+\mu-1}-(1-\lambda)(1-\mu)<\lambda\mu.
\end{equation}
\end{lema}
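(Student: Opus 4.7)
My plan is to reduce the stated equivalence to a single algebraic inequality in the symmetric variables $s:=\lambda+\mu$ and $t:=|\lambda-\mu|$, and then to check that the remaining defining conditions of $\Theta$ are either automatic under the hypotheses or forced by this inequality.

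First, I would rewrite \eqref{eq_lemaSSM}. Using $(1-\lambda)(1-\mu)=1-s+\lambda\mu$, the inequality becomes $\tfrac{K}{s-1}+s-1<2\lambda\mu$, equivalently $\tfrac{K+(s-1)^2}{s-1}<2\lambda\mu$. Substituting $2\lambda\mu=(s^2-t^2)/2$ and multiplying by $2(s-1)$, which is positive by hypothesis, this reads $(s-1)t^2<(s-1)s^2-2(s-1)^2-2K$. Since $s^2-2(s-1)=(s-1)^2+1$, the right-hand side equals $(s-1)^3+(s-1)-2K=r_K(s-1)$, and dividing back by $s-1>0$ shows that \eqref{eq_lemaSSM} is equivalent to
$$|\lambda-\mu|^2<\frac{r_K(\lambda+\mu-1)}{\lambda+\mu-1}.$$

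Next, I would simplify what it means to lie in $\Theta$ under the hypothesis $\lambda,\mu\in(0,1)$. Both $\lambda+\mu<2$ and $|\lambda-\mu|<2-\lambda-\mu$ are equivalent to $\max(\lambda,\mu)<1$, hence automatic. Consequently, $(\lambda,\mu)\in\Theta$ reduces to the conjunction of $\lambda+\mu\geq\nu_0+1$ together with the strict inequality $|\lambda-\mu|<\sqrt{r_K(\lambda+\mu-1)/(\lambda+\mu-1)}$; squaring the latter (both sides nonnegative, the radicand nonnegative precisely when $\lambda+\mu-1\geq\nu_0$) yields exactly the displayed inequality above.

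The only remaining point is that the displayed inequality by itself already forces $\lambda+\mu\geq\nu_0+1$, so that no information is lost in going from the inequality to membership in $\Theta$. Because $s-1>0$ and $|\lambda-\mu|^2\geq 0$, the inequality forces $r_K(s-1)>0$. Since $r_K'(z)=3z^2+1>0$ and $r_K$ has $\nu_0$ as its unique positive root (Remark \ref{rmk_SSM}), $r_K$ is strictly increasing on $[0,\infty)$, so $r_K(s-1)>0$ entails $s-1>\nu_0$, as desired. I do not expect a serious obstacle here: the proof is essentially a careful algebraic reduction, with the one subtle point being to track the sign of $s-1$ when clearing and reintroducing the denominator, which is fine under the standing hypothesis $\lambda+\mu>1$.
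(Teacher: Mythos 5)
Your proposal is correct and follows essentially the same route as the paper: the same substitution $s=\lambda+\mu$, $t=\lambda-\mu$, the same reduction of \eqref{eq_lemaSSM} to $t^2<r_K(s-1)/(s-1)$, and the same observation that $t^2\geq0$ forces $r_K(s-1)>0$ and hence $s-1>\nu_0$. Your explicit remarks that $|\lambda-\mu|<2-\lambda-\mu$ is automatic for $\lambda,\mu\in(0,1)$ and that $r_K$ is strictly increasing are slightly more careful than the paper's wording, but they are not a different argument.
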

\begin{proof}
As $\lambda+\mu>1$, we exchange the inequality \eqref{eq_lemaSSM} by
the following equivalent inequality:
\begin{equation}\label{ineq_SSM}
(\lambda+\mu-1)(2\lambda\mu+1-\lambda-\mu)-K>0.
\end{equation}

We consider the change of variables $s:=\lambda+\mu$, $t:=\lambda-\mu$ (so
that $\lambda=\frac{s+t}{2}$, $\mu=\frac{s-t}{2}$). We observe that
$\lambda$ and $\mu$ lie in $(0,1)$ if and only if $|t|<s$ and
$|t|<2-s.$ As we are assuming $\lambda+\mu>1$, we have $s>2-s$.
Therefore, $\lambda,\mu$ are real numbers in $(0,1)$ with $\lambda+\mu>1$
if and only if $|t|<2-s.$

In these new variables inequality \eqref{ineq_SSM} reads as
$(s-1)(\frac{s^2-t^2}{2}+1-s)-K>0,$ which is equivalent to
\begin{equation}\label{ineq_st}
t^2<\frac{(s-1)((s-1)^2+1)-2K}{s-1}=\frac{r_K(s-1)}{s-1}.
\end{equation}

$\Leftarrow$) Let $\lambda,\mu$ be real numbers in $(0,1)$ satisfying
$\lambda+\mu>1$ and \eqref{ineq_SSM}. Then $s:=\lambda+\mu$ lies
in $(1,2)$, $|t:=\lambda-\mu|<2-s$, and $s,t$ satisfy
\eqref{ineq_st}. In particular, $\frac{r_K(s-1)}{s-1} \geq 0$. As
we have $s>1$, this inequality is positive if and only if its
numerator is positive, which holds if and only if $s-1\geq \nu_0.$
Therefore $s$ is in $[\nu_0+1,2)$ and
$|t|<\min\left\{2-s,\sqrt{\frac{r_K(s-1)}{s-1}}\right\}$; in other
words, $(\lambda,\mu)$ belongs to $\Theta.$

$\Rightarrow$) Conversely, let $(\lambda,\mu) \in \Theta.$
%Then $\lambda+\mu-1\geq \nu_0>K,$ so inequality \eqref{ineq_SSM} holds
%if $2\lambda\mu+1-\lambda-\mu$ is greater than 1. But this is true
%because $2\lambda\mu-\lambda-\mu$ is equal to
%$\lambda(\mu-1)+\mu(\lambda-1),$ which is positive.
Then, using the change of variables above, we have that $(s,t)$ satisfies
$|t|<\sqrt{\frac{r_K(s-1)}{s-1}}$. In particular, \eqref{ineq_st} is satisfied and hence \eqref{eq_lemaSSM} is satisfied as well.
% \eqref{ineq_st} and $s\in [\nu_0+1,2)$, $|t|<2-s.$ We note that
% $\nu_0 >0$ and so $s$ lies in $(1,2)$. As inequality
% \eqref{ineq_st} is equivalent to \eqref{eq_lemaSSM}, we have that
% $\lambda,\mu$ belong to $(0,1)$, $\lambda+\mu>1$, and satisfy
% \eqref{eq_lemaSSM}. FIX $X_1+X_2>1$: SHOULD IT GO HERE???
\end{proof}
%
% \begin{rk}\rm
% Given $K \in (0,1)$, we consider the polynomial $r_K(z)=z^3+z-2K.$ Then using
% \end{rk}

\begin{prop}\label{prop_SSM}
Given $K$ a real number in $(0,1)$, we consider the polynomial
$r_K(z)=z^3+z-2K$ and let $\nu_0$ be its positive real root in
$(K,1)$ (see Remark \ref{rmk_SSM}). We fix two real numbers
$\lambda,\mu$ in $(0,1)$ such that $\lambda+\mu>1$. Then the set
$$\Omega_{\lambda,\mu}=\left\{(\al,\beta) \in \RR^2 \left|  0<\al<\mu, |\beta|<\lambda, |\al\beta-\frac{K}{\lambda+\mu-1}|<(1-\lambda)(1-\mu)\right.\right\}$$
is non-empty if and only if $(\lambda,\mu)$ belongs to $\Theta.$
Moreover in this case, $(\al,\beta)$ belongs to $\Omega_{\lambda,\mu}$
if and only if $\al$ belongs to
$\left(\frac{\frac{K}{\lambda+\mu-1}-(1-\lambda)(1-\mu)}{\lambda},\mu\right),$ $\al>0,$
and
$$\max\left\{-\lambda, \frac{\frac{K}{\lambda+\mu-1}-(1-\lambda)(1-\mu)}{\al}\right\} <\beta<
\min\left\{\lambda,
\frac{\frac{K}{\lambda+\mu-1}+(1-\lambda)(1-\mu)}{\al}\right\}.$$
\end{prop}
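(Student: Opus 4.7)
The plan is to abbreviate $c_1 := K/(\lambda+\mu-1)$ and $c_2 := (1-\lambda)(1-\mu)$, so that the third condition defining $\Omega_{\lambda,\mu}$ is rewritten as the two-sided inequality $c_1 - c_2 < \alpha\beta < c_1 + c_2$. Under our hypotheses one has $c_1 > 0$ (since $K>0$ and $\lambda+\mu>1$) and $c_2 > 0$ (since $\lambda,\mu<1$), while the sign of $c_1-c_2$ is a priori unconstrained, which will force a small case split later on.

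I would then fix $\alpha > 0$ and solve for the admissible values of $\beta$. Dividing $c_1-c_2 < \alpha\beta < c_1+c_2$ by $\alpha>0$ and intersecting with $|\beta|<\lambda$ gives
$$\max\bigl\{-\lambda,\,(c_1-c_2)/\alpha\bigr\} \;<\; \beta \;<\; \min\bigl\{\lambda,\,(c_1+c_2)/\alpha\bigr\},$$
which is exactly the $\beta$-range in the proposition once one reinstates $c_1,c_2$. Because $c_1+c_2>0$ and $\alpha>0$, the inequality $-\lambda < (c_1+c_2)/\alpha$ is automatic; hence the only nontrivial compatibility between the two pairs of bounds is $(c_1-c_2)/\alpha < \lambda$, i.e.\ $\alpha > (c_1-c_2)/\lambda$ (automatically satisfied when $c_1 \le c_2$, since $\alpha>0$). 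Combined with $0<\alpha<\mu$, this places $\alpha$ in $(\max\{0,(c_1-c_2)/\lambda\},\,\mu)$, which is precisely the interval $\left(\frac{c_1-c_2}{\lambda},\mu\right)$ together with the supplementary requirement $\alpha>0$ written in the statement.

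The final step is to note that this $\alpha$-interval is non-empty if and only if $(c_1-c_2)/\lambda < \mu$, equivalently $c_1-c_2 < \lambda\mu$, which by Lemma \ref{lem_ineq} is exactly the condition $(\lambda,\mu)\in\Theta$. This yields both the non-emptiness equivalence and, when non-empty, the explicit description of $\Omega_{\lambda,\mu}$ as a union over the admissible $\alpha$'s of the $\beta$-intervals displayed above. The proof is essentially routine bookkeeping; the only delicate point is handling both signs of $c_1-c_2$ so that the $\max$ and $\min$ in the $\alpha$-range collapse correctly to the form $\left(\frac{c_1-c_2}{\lambda},\mu\right)$ with $\alpha>0$, and then matching $c_1-c_2<\lambda\mu$ to the geometric description of $\Theta$ via Lemma \ref{lem_ineq}.
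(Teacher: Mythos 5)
Your proposal is correct and follows essentially the same route as the paper's proof: both reduce the non-emptiness of $\Omega_{\lambda,\mu}$ to the inequality $\frac{K}{\lambda+\mu-1}-(1-\lambda)(1-\mu)<\lambda\mu$ (which Lemma \ref{lem_ineq} identifies with membership in $\Theta$) and obtain the interval descriptions for $\al$ and $\beta$ by the same bookkeeping, yours merely organized as a single ``solve for $\beta$ given $\al$'' analysis rather than two separate implications. No gaps.
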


\begin{proof}

$\Rightarrow$) If $(\al,\beta)$ is a point in $\Omega_{\lambda,\mu}$, then $|\al\beta-\frac{K}{\lambda+\mu-1}|<(1-\lambda)(1-\mu).$  This is equivalent to
\begin{equation}
\label{ineq_y1y4}
\frac{K}{\lambda+\mu-1}-(1-\lambda)(1-\mu)<\al\beta<\frac{K}{\lambda+\mu-1}+(1-\lambda)(1-\mu).
\end{equation}

In particular, as $\al\beta<\lambda\mu$, we have
$$\frac{K}{\lambda+\mu-1}-(1-\lambda)(1-\mu)<\lambda\mu.
$$ Hence, using Lemma \ref{lem_ineq} we obtain $(\lambda,\mu)\in \Theta$.
%This is a condition in $\lambda$, $\mu.$ Now we prove the following
%claim.
%\textit{Claim:} There exist $\lambda,\mu\in (0,1)$, $\lambda+\mu>1$, satisfying \eqref{eq_lemaSSM} if and only if $(\lambda,\mu)$ belongs to $\Theta.$
%\textit{Proof of Claim:}
%Once the claim is proven, we proceed with the proof of the proposition: we have seen that if $\Omega_{\lambda,\mu}$ is not empty then $(\lambda,\mu)$ belongs to $\Theta.$ Now we prove the converse.

Moreover, as $|\beta|<\lambda$, inequality
$\frac{K}{\lambda+\mu-1}-(1-\lambda)(1-\mu)<\al\beta$ implies
$\frac{K}{\lambda+\mu-1}-(1-\lambda)(1-\mu)<\lambda\al,$ and therefore $\al$
belongs to the interval
$$\left(\frac{\frac{K}{\lambda+\mu-1}-(1-\lambda)(1-\mu)}{\lambda},\mu\right).$$ The
inequalities on $\beta$ follow directly from \eqref{ineq_y1y4} and from $|\beta|<\lambda$. Conversely, if $\al$
belongs to the above interval, and $\beta$ satisfies
$$\max\left\{-\lambda, \frac{\frac{K}{\lambda+\mu-1}-(1-\lambda)(1-\mu)}{\al}\right\} <\beta<
\min\left\{\lambda,
\frac{\frac{K}{\lambda+\mu-1}+(1-\lambda)(1-\mu)}{\al}\right\},$$ then inequalities \eqref{ineq_y1y4} hold
%$$\frac{K}{\lambda+\mu-1}-(1-\lambda)(1-\mu)<\al\beta<\frac{K}{\lambda+\mu-1}+(1-\lambda)(1-\mu)$$
and hence $(\al,\beta)$ lies in $\Omega_{\lambda,\mu}.$

$\Leftarrow$) Let $(\lambda,\mu)$ be a point in $\Theta.$ In this case
$(\lambda,\mu)$ satisfies \eqref{eq_lemaSSM}, and in particular, the
interval
\begin{equation}\label{int_y4}
\left(\frac{\frac{K}{\lambda+\mu-1}-(1-\lambda)(1-\mu)}{\lambda},\mu\right)
\end{equation}
is non-empty. We choose $\al>0$ in this interval.

Then, the interval $$\left(
\frac{\frac{K}{\lambda+\mu-1}-(1-\lambda)(1-\mu)}{\al},\frac{\frac{K}{\lambda+\mu-1}+(1-\lambda)(1-\mu)}{\al}\right)$$
is non-empty (the left-hand side numerator is smaller than the right-hand side numerator, and the denominator is positive) and its
intersection with $(-\lambda,\lambda)$ is not empty. Indeed, as $\al>0$
and $\al$ belongs to the interval \eqref{int_y4}, we have
$$\frac{\frac{K}{\lambda+\mu-1}-(1-\lambda)(1-\mu)}{\al}<\lambda;$$ moreover $-\lambda$ is less than
$\frac{\frac{K}{\lambda+\mu-1}+(1-\lambda)(1-\mu)}{\al}$
because this expression is positive.

Finally, we choose $\beta$ in this intersection of intervals and we
obtain a point $(\al,\beta)$ in $\Omega_{\lambda,\mu}.$
%$$\frac{K}{\lambda+\mu-1}-(1-\lambda)(1-\mu)<\min\left\{\lambda\mu,\frac{K}{\lambda+\mu-1}+(1-\lambda)(1-\mu)\right\}.$$
%Hence we can choose a real number $\al$ in $(0,\mu)$ and a real number $\beta>0$ such that
%$$\frac{\frac{K}{\lambda+\mu-1}-(1-\lambda)(1-\mu)}{\al}<\beta<\min\left\{\lambda,\frac{\frac{K}{\lambda+\mu-1}+(1-\lambda)(1-\mu)}{\al}\right\},$$
%and we obtain a point $(\al,\beta)$ in $\Omega_{\lambda,\mu}.$
%
%The second assertion of the proposition is almost trivial.
\end{proof}

\begin{thm}\label{thm_SSM}
Let $K$ be a real number in $(0,1)$.
\begin{enumerate}
\item[(a)] Let $(\lambda,\mu)$ be a point in $\Theta$,
let $(\al,\beta)$ be a point in $\Omega_{\lambda,\mu}$, and consider real numbers
$\al'$ and $\beta'$ such that

\begin{enumerate}
\item[(i)] $\frac{|\al\beta-\frac{K}{\lambda+\mu-1}|}{1-\mu}<|\beta'|
<1-\lambda,$ and
\item[(ii)] $\al'=\frac{\al\beta-\frac{K}{\lambda+\mu-1}}{\beta'}.$
\end{enumerate}
Then, if we consider the change of variables
$a=(\lambda+\beta)/2$,$b=(1-\lambda-\beta')/2$,$c=(1-\lambda+\beta')/2$
$d=(\lambda-\beta)/2$, $e=(1-\mu-\al')/2$, $f=(\mu+\al)/2$,
$g=(\mu-\al)/2,$  $h=(1-\mu+\al')/2,$ the matrix $$A=\left(
\begin{array}{ccccc}
        a    &b & c    &d \\
            e   & f  &g   & h \\
            h  & g  &f   & e \\
             d &   c  &b  &  a   \end{array} \right)$$
is  a strictly stochastic $\SSM$ matrix with determinant $K$, $a+d+f+g>1$, $b\neq c$, and $f<g$.
\item[(b)] Conversely, let $$A=\left( \begin{array}{ccccc}
        a    &b & c    &d \\
            e   & f  &g   & h \\
            h  & g  &f   & e \\
             d &   c  &b  &  a   \end{array} \right)$$
be a strictly stochastic $\SSM$ matrix with determinant $K$ and
with $a+d+g+f>1$, $b\neq c$ and $f>g$. Then $F(A)$ is equal to
$$
\left( \begin{array}{ccccc}
        \lambda   & 1-\lambda&0    &0 \\
            1-\mu & \mu  & 0   & 0 \\
             0  &  0  &\al  & \al' \\
             0 &   0  &\beta'  &  \beta   \end{array} \right),
$$
where $(\lambda,\mu) \in \Theta$, $(\al,\beta) \in \Omega_{\lambda,\mu}$, and $\al'$, $\beta'$  satisfy
conditions (i) and (ii) stated in $(a).$
\end{enumerate}
\end{thm}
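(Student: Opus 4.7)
The proof has two directions, corresponding to parts (a) and (b), and both are fundamentally bookkeeping exercises built around Lemmas \ref{rem_transform} and \ref{lem_SSM} together with Proposition \ref{prop_SSM}. The guiding observation in both directions is that under $F$, the matrix $A$ becomes block-diagonal with upper block carrying entries $\lambda,1-\lambda,1-\mu,\mu$ and lower block carrying entries $\al,\al',\beta',\beta$, so that $\det A = \det F(A) = (\lambda+\mu-1)(\al\beta-\al'\beta')$.

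For part (a), the plan is to start from the data $(\lambda,\mu)\in\Theta$, $(\al,\beta)\in\Omega_{\lambda,\mu}$, and $\al',\beta'$ satisfying (i) and (ii), and to build $A$ by the inverse change of variables. Lemma \ref{rem_transform} guarantees the resulting $A$ is an $\SSM$ matrix, so strict stochasticity reduces to checking the five inequalities of Lemma \ref{lem_SSM}. Here $\lambda,\mu\in(0,1)$ and $|\beta|<\lambda$, $|\al|<\mu$ come directly from $(\lambda,\mu)\in\Theta$ and $(\al,\beta)\in\Omega_{\lambda,\mu}$; $|\beta'|<1-\lambda$ is the right half of (i); and the bound $|\al'|<1-\mu$ follows by substituting (ii) into the left half of (i). The determinant identity then gives $\det A = (\lambda+\mu-1)(\al\beta-\al'\beta') = K$ thanks to (ii). Finally, $a+d+f+g = \lambda+\mu > 1$ holds because $(\lambda,\mu)\in\Theta$, $b-c = -\beta'\neq 0$ (the left half of (i) keeps $\beta'$ nonzero), and $f-g = \al > 0$ by the definition of $\Omega_{\lambda,\mu}$.

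For part (b), I reverse the flow. Starting from a strictly stochastic $\SSM$ matrix $A$ with $\det A = K$ and the stated sign conditions, Lemma \ref{rem_transform} writes $F(A)$ in the claimed block form with $\lambda=a+d$, $\mu=f+g$, $\al=f-g$, $\al'=h-e$, $\beta=a-d$, $\beta'=c-b$, while Lemma \ref{lem_SSM} delivers $\lambda,\mu\in(0,1)$, $|\al|<\mu$, $|\al'|<1-\mu$, $|\beta|<\lambda$, $|\beta'|<1-\lambda$. The hypothesis $a+d+f+g>1$ reads $\lambda+\mu>1$; $f>g$ gives $\al>0$; $b\neq c$ gives $\beta'\neq 0$. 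Dividing $\det A = (\lambda+\mu-1)(\al\beta-\al'\beta')=K$ by $\beta'$ yields (ii), and feeding (ii) into $|\al'|<1-\mu$ together with $|\beta'|<1-\lambda$ produces (i). The bound $|\al\beta-K/(\lambda+\mu-1)| = |\al'\beta'| < (1-\lambda)(1-\mu)$ then certifies $(\al,\beta)\in\Omega_{\lambda,\mu}$, and Proposition \ref{prop_SSM} promotes the non-emptiness of $\Omega_{\lambda,\mu}$ to $(\lambda,\mu)\in\Theta$.

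The only real obstacle is the algebraic passage between condition (i) and the stochasticity bound $|\al'|<1-\mu$, which in both directions hinges on a single substitution via (ii). Everything else is direct application of the preceding lemmas and propositions, so the proof is essentially a careful assembly of those ingredients rather than a new technical step.
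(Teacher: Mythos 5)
Your proof is correct and follows essentially the same route as the paper's: both directions reduce to Lemmas \ref{rem_transform} and \ref{lem_SSM} plus the determinant identity $\det A=(\lambda+\mu-1)(\al\beta-\al'\beta')$, with the single substitution via (ii) linking condition (i) to the bound $|\al'|<1-\mu$, and Proposition \ref{prop_SSM} closing part (b). Your observation that $f-g=\al>0$ in part (a) is the consistent conclusion (the "$f<g$" in the statement of (a) is evidently a typo for $f>g$, matching the hypothesis of (b)).
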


\begin{rk}\rm
(1) By Proposition \ref{prop_SSM}, if $(\lambda,\mu)$ is a point in $\Theta$, there exists $(\al,\beta) \in \Omega_{\lambda,\mu}.$ This implies that $|\al\beta-\frac{K}{\lambda+\mu-1}|$ is smaller than $(1-\lambda)(1-\mu)$, and thus the interval
$$\left(\frac{|\al\beta-\frac{K}{\lambda+\mu-1}|}{1-\mu},1-\lambda\right)$$
is non-empty. In particular, there exists $\beta'$ in this interval. Therefore conditions (i) and (ii) in Theorem \ref{thm_SSM}(a) are not empty.

(2) Assumptions $a+d+g+f>1$, $f>g$, $b\neq c$ are biologically
meaningful: the elements in the diagonal of an evolutionary Markov
matrix stand for the conditional probabilities of no mutation,
which are supposed to be much higher than the off-diagonal
probabilities. It is even reasonable to assume that these diagonal
entries are greater than 0.5, giving in particular $a+d+g+f>1$. In
any case, the result proved above can be easily adapted to the
case $a+d+g+f<1$ or $f>g$ (we have not done it here in order to
make the paper more readable). Note also that any $\SSM$ matrix
with determinant $K$ and $f>g$ gives rise to a $\SSM$ matrix with
$f<g$ and determinant $K$ by permuting its 1st and 4th rows and
its 2nd and 3rd rows (or columns, if preferred).

The hypothesis $b\neq c$ was added to simplify the statement of
the Theorem and can be easily removed. Indeed, a matrix $A$ as in
(b) has $b=c$ and determinant equal to $K$ if and only if $F(A)$
has $\beta'=0$ and $K$ is equal to $(\lambda+\mu-1)\al\beta.$
Therefore $A$ is strictly stochastic with determinant $K$ and
$b=c$ if and only if $\frac{K}{\lambda(\lambda+\mu-1)}<|\al|<\mu,$
$\beta=\frac{K}{\al(\lambda+\mu-1)},$ $\beta'=$ and $\al'$ is any
number satisfying $|\al'|<1-\mu.$
\end{rk}

\begin{proof}

(a) Let $A$ be defined from $\lambda,\mu$, $\beta, \dots, \al$ as above.
Then $F(A)$ is equal to
$$
B=\left( \begin{array}{ccccc}
        \lambda   & 1-\lambda&0    &0 \\
            1-\mu & \mu  & 0   & 0 \\
             0  &  0  &\al  & \al' \\
             0 &   0  &\beta'  &  \beta   \end{array} \right).
$$
We prove that $A$ is a stochastic matrix using Lemma
\ref{lem_SSM}.

By hypothesis, $(\lambda,\mu)\in \Theta$ and hence $\lambda$ and $\mu$ lie in
$(0,1).$ Moreover, as $(\al,\beta)\in \Omega_{\lambda,\mu}$, we have $0<\al<\mu,$ $|\beta|<\lambda.$ By assumption (i),
$|\beta'|<1-\lambda$ is also satisfied. It remains to prove that
$|\al'|<1-\mu.$ But this follows from conditions (i) and (ii):
$$|\al'|=\frac{|\al\beta-\frac{K}{\lambda+\mu-1} |}{|\beta'|}<1-\mu.$$

% Note that assumption
% $\frac{\frac{K}{\lambda+\mu-1}-(1-\lambda)(1-\mu)}{\al} <\beta<
% \frac{\frac{K}{\lambda+\mu-1}+(1-\lambda)(1-\mu)}{\al}$ in 1) is needed
% for the non-emptyness of condition 2).

Row sums in $A$ are equal to 1 by definition of $a,b,\dots,h$. Moreover, as $B=F(A)$ is obtained from $A$ by a basis change, we have that $\det A=\det B$ and it coincides with $(\lambda+\mu-1)(\al\beta-\al'\beta').$
Thus, by assumption (ii) we have $\det A=K.$

(b) Lemma \ref{rem_transform} tells us that $F(A)$ has the shape
in the statement of the Proposition, and that $\lambda = a+d$,
$\mu=g+f$, $\al=f-g$, $\al'=h-e$, $\beta=a-d$, and  $\beta'=c-b$. By Lemma
\ref{lem_SSM} we have that $\lambda,\mu$ lie in $(0,1)$,
$[\al|<\lambda$, $|\beta|<\lambda$,  $|\al'|<1-\mu$, $|\beta'|<1-\lambda$. Moreover, as we are assuming
$a+d+g+f>1$, $b\neq c$, and $f>g$, we have $\lambda+\mu>1$, $\beta'\neq
0$, and $0<\al<\mu$.

On the other hand, $\det A=K$ implies
$K=(\lambda+\mu-1)(\al\beta-\al'\beta')$ and therefore condition (ii) holds.

The remaining inequality in (i),
$$\frac{|\al\beta-\frac{K}{\lambda+\mu-1}|}{1-\mu}<|\beta'|,$$
holds because $|\al'|$ satisfies (ii) and
$|\al'|<1-\mu$.

We prove now that $(\al,\beta)$ belongs to $\Omega_{\lambda\mu},$ that is,
% \begin{equation} \label{lastineq}
% \frac{\frac{K}{\lambda+\mu-1}-(1-\lambda)(1-\mu)}{\al} <\beta<
% \frac{\frac{K}{\lambda+\mu-1}+(1-\lambda)(1-\mu)}{\al},
% \end{equation} to see that
% condition 1) also holds.
%
%Inequalities \eqref{lastineq} hold if and only if
\begin{equation} \label{lastineq}
|\al\beta-\frac{K}{\lambda+\mu-1}|<(1-\lambda)(1-\mu).
\end{equation}
We have just seen that $|\beta'|$ satisfies condition (i), so
$$|\al\beta-\frac{K}{\lambda+\mu-1}|<|\beta'|(1-\mu)$$ and this last term is
$<(1-\lambda)(1-\mu).$ Therefore \eqref{lastineq} is satisfied.

Finally, as $(\al,\beta)$ is a point in $\Omega_{\lambda,\mu}$,
this set is not empty and $(\lambda,\mu)$ belongs to $\Theta$ by
Proposition \ref{prop_SSM}.
% Actually this happens because the
%matrix
%\begin{equation}\label{mat:ssm_block}
%\hat{A}=\left( \begin{array}{ccccc}
%        a +d   &b  + c&0    &0 \\
%            e +h  & g+ f  & 0   & 0 \\
%             0  &  0  &f-g  & h-e \\
%             0 &   0  &c-b  &  a-d   \end{array} \right).
%             \end{equation}
%is obtained from the matrix $A$ after performing a basis change to
%the new basis $\{(1,0,0,1),(0,1,1,0),(0,1,-1,0),(-1,0,0,1)\}$ (see
%\cite{CFS3} and \cite{CS}) and therefore $det(A)=det(\hat{A}),$
%which equals $(\lambda+\mu-1)(\al\beta-\beta'\al').$
\end{proof}

The previous results and their proofs provide the following algorithm for generating any $\SSM$ matrix
$$A=\left( \begin{array}{ccccc}
        a    &b & c    &d \\
            e   & f  &g   & h \\
            h  & g  &f   & e \\
             d &   c  &b  &  a   \end{array} \right).$$
with $a+d+g+f>1$, $f>g$, and $b\neq c.$

\begin{alg}\label{alg_SSM}
\rm(Generation of $\SSM$ matrices with given determinant.)

\noindent\textit{Input:} $K$ in $(0,1).$

\noindent\textit{Output:} A strictly stochastic $\SSM$ matrix $A$
with determinant $K.$
\begin{itemize}
\item[\texttt{Step 1:}] Compute the unique positive root $\nu_0$ of
$r_K(z)$
following Remark \ref{rmk_SSM}.
\item[\texttt{Step 2:}] Take $s$ randomly in $\left[ \nu_0+1,2\right).$
\item[\texttt{Step 3:}] Take $t$ randomly such that $|t|< \min\left\{2-s,\sqrt{\frac{r_K(s-1)}{s-1}}\right\}.$
\item[\texttt{Step 4:}] Set $\lambda=\frac{s+t}{2}$ and $\mu=\frac{s-t}{2}.$
\item[\texttt{Step 5:}] Take $\al>0$ randomly in $\left(\frac{\frac{K}{\lambda+\mu-1}-(1-\lambda)(1-\mu)}{\lambda},\mu\right).$
\item[\texttt{Step 6:}] Choose $\beta$ randomly such that
$$\max\left\{-\lambda, \frac{\frac{K}{\lambda+\mu-1}-(1-\lambda)(1-\mu)}{\al}\right\} <\beta<
\min\left\{\lambda, \frac{\frac{K}{\lambda+\mu-1}+(1-\lambda)(1-\mu)}{\al}\right\}.$$
\item[\texttt{Step 7:}] Choose $\beta'$ randomly such that
$\frac{|\al\beta-\frac{K}{\lambda+\mu-1}|}{1-\mu}<|\beta'| <1-\lambda.$
\item[\texttt{Step 8:}] Set $\al':=\frac{\al\beta-\frac{K}{\lambda+\mu-1}}{\beta'},$
$a:=(\lambda+\beta)/2$,$b:=(1-\lambda-\beta')/2$,$c:=(1-\lambda+\beta')/2$
$d:=(\lambda-\beta)/2$, $e:=(1-\mu-\al')/2$, $f:=(\mu+\al)/2$,
$g:=(\mu-\al)/2,$  and $h:=(1-\mu+\al')/2.$
\item[\texttt{Final:}] Return
$$A=\left( \begin{array}{ccccc}
        a    &b & c    &d \\
            e   & f  &g   & h \\
            h  & g  &f   & e \\
             d &   c  &b  &  a   \end{array} \right).$$
\end{itemize}
\end{alg}

\begin{rk}
\rm As $\SSM$ matrices include $\Ka$ matrices, using Remark \ref{Rmk_exp} we see that there exist matrices produced by the algorithm above that are not of type $\exp(Q)$.
\end{rk}

\section{Generating $\GMM$ matrices with given determinant}

For $\GMM$ matrices we do not have such a general result as in the
previous sections. We do not know how to  generate \textit{any} strictly stochastic $\GMM$ matrix,
%that cover the whole set of $\GMM$ matrices with determinant $K,$
but here we explain a way for generating some of them.

%We assume that is easy to obtain strictly stochastic matrices with
%determinant bigger than a given number $K \in (0,1)$. Indeed,
%taking a strictly stochastic matrix with big enough diagonal
%entries and small enough off-diagonal entries should work (recall
%that stochastic matrices have determinant less than or equal to
%1). This is Step 1 in the algorithm below and can be achieved by
%sampling $\GMM$ matrices with larger diagonal entries.

We could obtain a strictly stochastic matrix $\GMM$ matrix with determinant equal to $K$ by exponentiating a rate matrix  (i.e. a matrix with row sums equal to 0 and off-diagonal positive entries) with trace equal to $\log K$ (cf. \cite[Theorem 4.19]{ASCB2005}). However, not all $\GMM$ matrices are of this type (see \cite{Culver} and Remark \ref{Rmk_exp}). We use that the product of two strictly stochastic matrices
is again a strictly stochastic matrix in order to obtain a broader class of $\GMM$ matrices. In fact, we multiply a $\GMM$ matrix of type $\exp(Q)$ with determinant $\delta>K$ by a $\SSM$ matrix of determinant $K/\delta.$ We must admit that we do not know how much larger is this class of matrices. The set $V$ of $\GMM$ matrices with determinant $K$ corresponds to an affine variety of dimension $11.$  There are 11 free parameters for a rate matrix $Q$ with given trace, so the matrices of type $\exp(Q)$ lie on a subset of $V$ of dimension 11. Therefore the set of matrices produced by the algorithm below form a subset of maximum dimension of $V,$ and this subset is larger than the set $\{\exp(Q) | Q \textrm{ rate matrix, } \tr Q= K \}.$

\begin{alg}
\rm(Generation of $\GMM$ matrices with given determinant.)

\noindent\textit{Input:} $K$ in $(0,1).$

\noindent\textit{Output:} A strictly stochastic $\GMM$ matrix $A$
with determinant $K.$
\begin{itemize}
\item[\texttt{Step 1:}] Take a random number $t$ in $(\log K,0).$
\item[\texttt{Step 2:}] Generate a random rate matrix $Q$ with nonzero entries and $\tr Q=t.$
\item[\texttt{Step 3:}] Compute $A_0=\exp(Q).$ %strictly stochastic $\GMM$ matrix $A_0$ with determinant greater than $K$. Set $\delta:=\det A_0.$
\item[\texttt{Step 4:}] Following algorithm \ref{alg_SSM}, generate a strictly stochastic $\SSM$ matrix $B$ with determinant equal to $K/e^t.$
\item[\texttt{Final:}] Return $A=BA_0.$
\end{itemize}
\end{alg}

%\bibliography{invariants}

\begin{thebibliography}{JHA{\etalchar{+}}03}

\bibitem[AR03]{Allman2003}
ES~Allman and JA~Rhodes.
\newblock Phylogenetic invariants for the general {M}arkov model of sequence
  mutation.
\newblock {\em Math. Biosci.}, 186(2):113--144, 2003.

\bibitem[AR04]{Arbook}
ES~Allman and JA~Rhodes.
\newblock {\em Mathematical models in biology, an introduction}.
\newblock Cambridge University Press, January 2004.
\newblock ISBN 0-521-52586-1).

\bibitem[BH87]{barryhartigan87}
D~Barry and JA~Hartigan.
\newblock Asynchronous distance between homologous {DNA} sequences.
\newblock {\em Biometrics}, 43(2):261--276, 1987.

\bibitem[CFS07]{CFS}
M~Casanellas and J~Fernandez-Sanchez.
\newblock Performance of a new invariants method on homogeneous and
  nonhomogeneous quartet trees.
\newblock {\em Mol. Biol. Evol.}, 24(1):288--293, 2007.

\bibitem[CFS11]{CFS3}
M~Casanellas and J~Fernandez-Sanchez.
\newblock Relevant phylogenetic invariants of evolutionary models.
\newblock {\em Journal de Mathématiques Pures et Appliquées}, 96:207--229,
  2011.

\bibitem[CS05]{CS}
M~Casanellas and S~Sullivant.
\newblock The strand symmetric model.
\newblock In L.~Pachter and B.~Sturmfels, editors, {\em Algebraic Statistics
  for computational biology}, chapter~16. Cambridge University Press, 2005.

\bibitem[Cul66]{Culver}
Walter~J. Culver.
\newblock On the existence and uniqueness of the real logarithm of a matrix.
\newblock {\em Proc. Amer. Math. Soc.}, 17:1146--1151, 1966.

\bibitem[DK09]{Draisma}
J~Draisma and J~Kuttler.
\newblock On the ideals of equivariants tree models.
\newblock {\em Mathematische Annalen}, 344:619--644, 2009.

\bibitem[GPS03]{Greuel2003}
GM~Greuel, G~Pfister, and H~Schoenemann.
\newblock Singular: A computer algebra system for polynomial computations.
\newblock Available at \texttt{http://www.singular.uni-kl.de/}, 2003.

\bibitem[HKY85]{Hasegawa1985}
M~Hasegawa, H~Kishino, and T~Yano.
\newblock Dating of the human-ape splitting by a molecular clock of
  mitochondrial {DNA}.
\newblock {\em Journal of Molecular Evolution}, 22:160--174, 1985.

\bibitem[JC69]{JC69}
TH~Jukes and CR~Cantor.
\newblock Evolution of protein molecules.
\newblock {\em In Mammalian Protein Metabolism}, pages 21--132, 1969.

\bibitem[JHA{\etalchar{+}}03]{Jermiin2003}
LS~Jermiin, SY~Ho, F~Ababneh, J~Robinson, and AW~Larkum.
\newblock Hetero: a program to simulate the evolution of dna on a four-taxon
  tree.
\newblock {\em Appl Bioinformatics. 2003:159-63}, 2:159--163, 2003.

\bibitem[Kim80]{Kimura1980}
M~Kimura.
\newblock A simple method for estimating evolutionary rates of base
  substitution through comparative studies of nucleotide sequences.
\newblock {\em J. Mol. Evol.}, 16:111--120, 1980.

\bibitem[Kim81]{Kimura1981}
M~Kimura.
\newblock Estimation of evolutionary sequences between homologous nucleotide
  sequences.
\newblock {\em Proc. Nat. Acad. Sci. , USA}, 78:454--458, 1981.

\bibitem[PS05]{ASCB2005}
L~Pachter and B~Sturmfels, editors.
\newblock {\em Algebraic Statistics for computational biology}.
\newblock Cambride University Press, November 2005.
\newblock ISBN 0-521-85700-7.

\bibitem[RG97]{Rambaut1997}
A~Rambaut and NC~Grassly.
\newblock {Seq-Gen}: An application for the {M}onte {C}arlo simulation of {DNA}
  sequence evolution along phylogenetic trees.
\newblock {\em Comput. Appl. Biosci.}, 13:235--238, 1997.

\bibitem[SS03]{Semple2003}
C~Semple and M~Steel.
\newblock {\em Phylogenetics}, volume~24 of {\em {\rm Oxford Lecture Series in
  Mathematics and its Applications}}.
\newblock Oxford University Press, Oxford, 2003.

\bibitem[SS05]{Sturmfels2005}
B~Sturmfels and S~Sullivant.
\newblock Toric ideals of phylogenetic invariants.
\newblock {\em Journal of Computational Biology}, 12:204--228, 2005.

\bibitem[Ste94]{Steel94}
MA~Steel.
\newblock Recovering a tree from the leaf colourations it generates under a
  markov model.
\newblock {\em Applied Mathematics Letters}, 7:19--24, 1994.

\bibitem[Tav86]{GTR}
Simon Tavar{\'e}.
\newblock Some probabilistic and statistical problems in the analysis of {DNA}
  sequences.
\newblock In {\em Some mathematical questions in biology---{DNA} sequence
  analysis ({N}ew {Y}ork, 1984)}, volume~17 of {\em Lectures Math. Life Sci.},
  pages 57--86. Amer. Math. Soc., Providence, RI, 1986.

\bibitem[Yan97]{Yang1997}
Z~Yang.
\newblock {PAML}: A program package for phylogenetic analysis by maximum
  likelihood.
\newblock {\em CABIOS}, 15:555--556, 1997.

\end{thebibliography}
%\bibliographystyle{alpha}

\newcommand{\etalchar}[1]{$^{#1}$}

\end{document}